\lstdefinelanguage{scala}{
  alsoletter={@,=,>,:,-},
  morekeywords={let,abstract, case, catch, class, def, do, else, extends, final, finally, for, if, implicit, import, match, new, null, object, 
override, package, private, protected, requires, return, sealed, super, this, throw, trait, try, true, type, val, var, while, yield, domain, Boolean,
postcondition, precondition,invariant, constraint, assert, forAll, in, _, return, @generator, ensure, require, ensuring, given, have, =>, continue, returns, havoc,:-},
  sensitive=true,
  morecomment=[l]{//},
  morecomment=[s]{/*}{*/},
  morestring=[b]"
}
\newcommand{\codestyle}{\small\sffamily}
\tiny\color{gray},
\renewcommand{\emph}[1]{\textit {#1}}
\newcommand{\Z}{{\mathbb Z}}
\newif\ifLongVersion\LongVersionfalse
\newcommand{\constraint}{constraint}
\newcommand{\constraintLang}{\constraint\ language}
\newcommand{\constraintLangs}{\constraintLang s}
\newcommand{\Constraint}{Constraint}
\newcommand{\ConstraintLang}{\Constraint\ language}
\newcommand{\ConstraintLangs}{\ConstraintLang s}
\newcommand{\Con}{\ensuremath{\mathit{Constr}}} 
\newcommand{\constraints}{\constraint s}
\newcommand{\relsym}{relation symbol}
\newcommand{\relsyms}{\relsym s}
\newcommand{\unicl}{\ensuremath{\mathit{Cl}_\forall}}
\newcommand{\fv}[1]{\ensuremath{\mathit{fv}(#1)}}
\newcommand{\positions}[1]{\ensuremath{\mathit{positions}(#1)}}
\newcommand{\under}[2]{\ensuremath{#1 \!\downarrow\! #2}}
\newcommand{\pand}{ \land}
\newcommand{\ClauseSet}{{\cal HC}}
\title{Disjunctive Interpolants for Horn-Clause Verification\\
 (Extended Technical Report)}
\author{Philipp R\"ummer\inst{1} 
\and Hossein Hojjat\inst{2} \and 
  Viktor Kuncak\inst{2}}
\authorrunning{R\"ummer, Hojjat, Kuncak}
\institute{Uppsala University, Sweden \and
Swiss Federal Institute of Technology Lausanne (EPFL)}
\begin{document}
\maketitle

\begin{abstract}
One of the main challenges in software verification is
efficient and precise compositional analysis of programs
with procedures and loops. Interpolation methods remains one
of the most promising techniques for such verification, and
are closely related to solving Horn clause constraints.  We
introduce a new notion of interpolation, disjunctive
interpolation, which solve a more general class of problems
in one step compared to previous notions of interpolants,
such as tree interpolants or inductive sequences of interpolants.
We present algorithms
and complexity for construction of disjunctive interpolants,
as well as their use within an abstraction-refinement loop.
We have implemented Horn clause verification algorithms that
use disjunctive interpolants and evaluate them on
benchmarks expressed as Horn clauses over the theory
of integer linear arithmetic. 
\end{abstract}

\section{Introduction}

Software model checking has greatly benefited from the
combination of a number of seminal ideas: automated abstraction
through theorem proving \cite{GrafSaidi97ConstructionAbstractStateGraphsPVS}, 
exploration of finite-state
abstractions, and counterexample-driven refinement 
\cite{BallETAL02RelativeCompletenessAbstractionRefinement}. Even
though these techniques can be viewed independently, the
effectiveness of verification  has been
consistently improving by providing more sophisticated
communication between these steps. Often,
carefully chosen search aspects are being pushed
into a learning-enabled constraint solver, resulting in better
overall verification performance. An essential
advance was to use interpolants derived from
unsatisfiability proofs to refine the abstraction
\cite{Henzinger:2004}.  In recent years,
we have seen significant progress in interpolating methods for
different logical constraints \cite{DBLP:journals/tocl/CimattiGS10,bonacina12,iprincess2011,tree-interpolants}, and a wealth of more
general forms of  interpolation
\cite{tree-interpolants,DBLP:conf/popl/HeizmannHP10,DBLP:conf/sas/AlbarghouthiGC12}. 
In this paper
we identify a new notion, \emph{disjunctive interpolants}, which are
more general than tree interpolants and inductive sequences of
interpolants.  Like tree
interpolation~\cite{tree-interpolants,DBLP:conf/popl/HeizmannHP10}, a
disjunctive interpolation query is a tree-shaped constraint specifying the
interpolants to be derived; however, in disjunctive
interpolation, branching in the tree can represent both conjunctions and
disjunctions.  We present an algorithm for solving the interpolation
problem, relating it to a subclass of recursion-free Horn clauses.
We then consider solving general recursion-free Horn clauses and show that this problem
is solvable whenever the logic admits interpolation. We establish
tight complexity bounds for solving recursion-free Horn clauses for propositional
logic (PSPACE) and for integer linear arithmetic (co-NEXPTIME). In contrast,
the disjunctive interpolation problem remains in coNP for these logics.
We also show how to use solvers for recursion-free Horn clauses to
verify recursive Horn clauses using counterexample-driven predicate abstraction. We
present an algorithm and experimental results on publicly available benchmarks.

\subsection{Related Work}

There is a long line of research on Craig \textbf{interpolation}
methods, and generalised forms of interpolation, tailored to
verification. For an overview of interpolation in the presence of
theories, we refer the reader to
\cite{DBLP:journals/tocl/CimattiGS10,iprincess2011}. Binary Craig
interpolation for implications~$A \to C$ goes back to
\cite{craig1957linear}, was carried over to conjunctions~$A \wedge B$
in \cite{DBLP:conf/cav/McMillan03}, and generalised to inductive
sequences of interpolants
in~\cite{Henzinger:2004,DBLP:conf/cav/McMillan06}. The concept of tree
interpolation, strictly generalising inductive sequences of
interpolants, is presented in the documentation of the interpolation
engine iZ3~\cite{tree-interpolants}; the computation of tree
interpolants by computing a sequence of binary interpolants is also
described in \cite{DBLP:conf/popl/HeizmannHP10}. In this paper, we
present a new form of interpolation, \emph{disjunctive interpolation},
which is strictly more general than sequences of interpolants and tree
interpolants.  Our implementation supports Presburger arithmetic,
including divisibility constraints \cite{iprincess2011}, which is
rarely supported by existing tools, yet helpful in practice
\cite{HojjatETAL12AcceleratingInterpolants}.

A further generalisation of inductive sequences of interpolants are
restricted DAG interpolants~\cite{DBLP:conf/sas/AlbarghouthiGC12},
which also include disjunctiveness in the sense that multiple paths
through a program can be handled simultaneously.  Disjunctive
interpolants are incomparable in power to restricted DAG interpolants,
since the former does not handle interpolation problems in the form of
DAGs, while the latter does not subsume tree interpolation.  A
combination of the two kinds of interpolants (``disjunctive DAG
interpolation'') is strictly more powerful (and harder) than
disjunctive interpolation, see Sect.~\ref{sec:complexity} for a
complexity-theoretic analysis. We discuss techniques and heuristics to
practically handle shared sub-trees in disjunctive interpolation,
extending the benefits of DAG interpolation to recursive programs.

Inter-procedural \textbf{software model checking} with
interpolants has been an active area of research.  
In the context of
predicate abstraction, it has been discussed how well-scoped
invariants can be inferred~\cite{Henzinger:2004} in the presence of
function calls. 
Based on the concept of Horn clauses, a predicate
abstraction-based algorithm for bottom-up construction of function
summaries was presented in \cite{DBLP:conf/pldi/GrebenshchikovLPR12}.
Verification of programs with procedures
is described in \cite{DBLP:conf/popl/HeizmannHP10} (using
nested word automata) as well as in \cite{DBLP:conf/vmcai/AlbarghouthiGC12}.

The use of \textbf{Horn clauses} as intermediate representation for
verification was proposed in \cite{DBLP:conf/popl/GuptaPR11}, with the
verification of concurrent programs as main application. The
underlying procedure for solving sets of recursion-free Horn clauses,
over the combined theory of linear \emph{rational} arithmetic and
uninterpreted functions, was presented in
\cite{DBLP:conf/aplas/GuptaPR11}. A range of further applications of
Horn clauses, including inter-procedural model checking, was given in
\cite{DBLP:conf/pldi/GrebenshchikovLPR12}. Horn clauses are also
proposed as intermediate/exchange format for verification problems in
\cite{verificationAsSMT}, and are natively supported by the SMT solver
Z3~\cite{Moura:2008}.  Our paper extends this work by giving
general results about solvability and
computational complexity, independent of any particular calculus.
Our experiments are with linear \emph{integer} arithmetic, arguably
a more faithful model of discrete computation than rationals 
\cite{HojjatETAL12AcceleratingInterpolants}.


\section{Example: Verification of Recursive Predicates}
\label{sec:example}

We start by showing how our approach can verify programs encoded as
Horn clauses, by means of predicate abstraction and a theorem prover
for Presburger arithmetic.  
Fig.~\ref{fig:merge} shows an example of a system of Horn clauses,
generated by a straightforward length abstraction of a merge
operation that accepts two sorted lists and produces a new
one by merging them. Addition of an element increases the
resulting length ($Z$) by one whereas the processing
continues with one of the argument lists shorter. After
invoking such an operation, we wish to check whether it is
possible for the resulting length $Z$ to be more than the
sum of the lengths of the argument lists $X + Y$. In general,
we encode error conditions as Horn clauses with $\mathit{false}$ in their head,
and refer to such clauses as error clauses, although such clauses
do not have a special semantic status in our system.
When invoked with these clauses as input, our verification
tool automatically identifies that the definition of \textsf{merge}
as the predicate $X + Y - Z \ge 0$ gives a solution to these
Horn clauses. In terms of safety (partial correctness), this means that
the error condition cannot be reached.
\begin{figure}[tb]
\begin{lstlisting}
(1)  merge(X,Y,Z) $\expl$ X = 0 $\pand$ Y >= 0 $\pand$ Z = Y
(2)  merge(X,Y,Z) $\expl$ Y = 0 $\pand$ X >= 0 $\pand$ Z = X
(3)  merge(X,Y,Z) $\expl$ Y1 = Y - 1 $\pand$ merge(X, Y1, Z1) $\pand$ Z = Z1 + 1
(4)  merge(X,Y,Z) $\expl$ X1 = X - 1 $\pand$ merge(X1, Y, Z1) $\pand$ Z = Z1 + 1
(5)  false  $\expl$ merge(X,Y,Z) $\pand$ Z > X + Y
\end{lstlisting}
\vspace*{-3mm}
\caption{Horn Clauses Abstracting the Merge of Two Sorted Lists and an Assertion on Resulting Length.
Variables are universally quantified in each clause.\label{fig:merge}}

\medskip
\begin{lstlisting}
(1)   merge(X,Y,Z) $\expl$ X = 0 $\pand$ Y >= 0 $\pand$ Z = Y
(3')  merge1(X,Y,Z) $\expl$ Y1 = Y - 1 $\pand$ merge(X, Y1, Z1) $\pand$ Z = Z1 + 1
(4')  merge1(X,Y,Z) $\expl$ X1 = X - 1 $\pand$ merge(X1, Y, Z1) $\pand$ Z = Z1 + 1
(5')  false  $\expl$ merge1(X,Y,Z) $\pand$ Z > X + Y
\end{lstlisting}
\vspace*{-3mm}
\caption{Extended recursion-free approximation of the Horn clauses in Fig.~\ref{fig:merge}.}
\label{fig:extendedApprox}
\end{figure}

Our approach uses counterexample-driven refinement to perform
verification. 
In this example, the abstraction of Horn clauses starts with
a trivial set of predicates, containing only the predicate $\mathit{false}$,
which is assumed to be a valid approximation until proven otherwise.
Upon examining a clause that has a concrete satisfiable formula on the right-hand
side (e.g. $X = 0 \pand Y >= 0 \pand Z = Y$), we rule out $\mathit{false}$ as the
approximation of \textsf{merge}. In the absence of other candidate predicates,
the approximation of \textsf{merge} becomes the conjunction of an empty set of
predicates, which is $\mathit{true}$.  Using this approximation the error clause
is no longer satisfied.
At this point the algorithm checks whether a true error is reached by
directly chaining the clauses involved in computing the approximation
of predicates. This amounts to checking whether the following recursion-free
subset of clauses has a solution:
\begin{lstlisting}
(1) merge(X,Y,Z) $\expl$ X = 0 $\pand$ Y >= 0 $\pand$ Z = Y
(5) false  $\expl$ merge(X,Y,Z) $\pand$ Z > X + Y
\end{lstlisting}
The solution to above problem is any formula $I(X,Y,Z)$ such that
\begin{lstlisting}
I(X,Y,Z) $\expl$ X = 0 $\pand$ Y >= 0 $\pand$ Z = Y
false  $\expl$ I(X,Y,Z) $\pand$ Z > X + Y
\end{lstlisting}
This is precisely an interpolant of $X = 0 \pand Y >= 0 \pand Z = Y$
and $Z > X + Y$. If our algorithm picks the interpolant $Z \le X + Y$,
the subsequent check shows it to be a solution and the program is successfully verified
(this is what happens in our current implementation).
In general, however, there is no guarantee about which of the interpolants will be picked,
so another valid solution is $P_1(X,Y,Z) \equiv Z = Y \land X \ge 0$. For illustration
purposes, suppose $P_1$ is the interpolant picked. The currently considered
possible contradiction for Horn clauses is thereby eliminated, and $P_1$ is added into
a list of abstraction predicates for the relation~\textsf{merge}.
Because the predicates approximating \textsf{merge} are now updated, we consider
the abstraction of the system in terms of these predicates. Because of the clause (2), however,
$P_1$ is not a conjunct in a valid approximation, which leads us to consider clauses (2) and (5)
and add, for example, $P_2(X,Y,Z) \equiv Z=X \land Y \ge 0$ as another predicate in
the approximation of \textsf{merge}. Note, however, that both $P_1$ and $P_2$
are ruled out as approximation of clause (3), so the following recursion-free
unfolding is not solved by the approximation so far:
\begin{lstlisting}
(1)   merge(X,Y,Z) $\expl$ X = 0 $\pand$ Y >= 0 $\pand$ Z = Y
(3')  merge1(X,Y,Z) $\expl$ Y1 = Y - 1 $\pand$ merge(X, Y1, Z1) $\pand$ Z = Z1 + 1
(5')  false  $\expl$ merge1(X,Y,Z) $\pand$ Z > X + Y
\end{lstlisting}
This particular problem could be reduced to solving an interpolation sequence, but it is
more natural to think of it simply as a solution for recursion-free Horn clauses. A solution
is an interpretation of the relations \textsf{merge} and \textsf{merge1} as ternary
relations on integers, such that the clauses are true. Note that this problem could also
be viewed as the computation of tree interpolants, which are also a special case of solving
recursion-free Horn clauses, as are DAG interpolants and a new notion of disjunctive
tree interpolants that we introduce. The general message, in line with 
\cite{DBLP:conf/popl/GuptaPR11,DBLP:conf/aplas/GuptaPR11,DBLP:conf/pldi/GrebenshchikovLPR12} 
is that
recursion-free clauses are a perfect fit for counterexample-driven verification: they allow
us to provide the theorem proving procedure with much more information that they can use to
refine abstractions. In fact, we could also provide further recursion-free approximations, such as
in Fig.~\ref{fig:extendedApprox}.
In the limit, the original set of clauses or its recursive unfoldings are its own approximations,
some of them exact, 
but the advantage of \emph{recursion-free} Horn clauses is that their solvability is decidable
under very general conditions. This provides us with a solid
theorem proving building block to construct robust and predictable
solvers for the undecidable recursive case. Our paper describes a new such building block: 
disjunctive interpolants, which correspond to a subclass of non-recursive Horn clauses.


\section{Formulae and Horn Clauses}

\paragraph{\ConstraintLangs.}
Throughout this paper, we assume that a first-order vocabulary of
\emph{interpreted symbols} has been fixed, consisting of a set~$\cal
F$ of fixed-arity function symbols, and a set~$\cal P$ of fixed-arity
predicate symbols. Interpretation of $\cal F$ and $\cal P$ is
determined by a class~${\cal S}$ of structures $(U, I)$
consisting of non-empty universe~$U$, and a mapping~$I$ that assigns
to each function in $\cal F$ a set-theoretic function over $U$, and to
each predicate in $\cal P$ a set-theoretic relation over $U$. As a
convention, we assume the presence of an equation symbol~``$=$'' in $\cal
P$, with the usual interpretation.  Given a countably infinite
set~$\cal X$ of variables, a \emph{\constraintLang} is a
set~\Con\ of first-order formulae over $\cal F, P, X$
For example, the language of quantifier-free Presburger arithmetic has
${\cal F} = \{+, -, 0, 1, 2, \ldots\}$ and ${\cal P} = \{=, \leq,
|\}$).

A \constraint\ is called \emph{satisfiable} if it holds for some
structure in $\cal S$ and some assignment of the variables~$\cal X$,
otherwise \emph{unsatisfiable}. We say that a set~$\Gamma \subseteq \Con$ of
\constraints\ \emph{entails} a \constraint~$\phi \in \Con$ if
every structure and variable assignment that satisfies all
\constraints\ in $\Gamma$ also satisfies $\phi$; this is denoted by
$\Gamma \models \phi$.

\fv{\phi} denotes the set of free variables in \constraint~$\phi$.
We write $\phi[x_1, \ldots, x_n]$ to state that a \constraint\
contains (only) the free variables $x_1, \ldots, x_n$, and $\phi[t_1,
\ldots, t_n]$ for the result of substituting the terms~$t_1, \ldots,
t_n$ for $x_1, \ldots, x_n$.  Given a \constraint~$\phi$ containing
the free variables~$x_1, \ldots, x_n$, we write~$\unicl (\phi)$ for
the \emph{universal closure} $\forall x_1, \ldots, x_n. \phi$.

\paragraph{Positions.}
We denote the set of \emph{positions} in a \constraint~$\phi$ by
\positions{\phi}. For instance, the \constraint~$a \wedge \neg a$ has
4~positions, corresponding to the sub-formulae~$a \wedge \neg a, \neg
a$, and the two occurrences of $a$. The sub-formula of a
formula~$\phi$ underneath a position~$p$ is denoted by
\under{\phi}{p}, and we write $\phi[p/\psi]$ for the result of
replacing the sub-formula~\under{\phi}{p} with $\psi$.  Further, we
write $p \leq q$ if position $p$ is above $q$ (that is, $q$ denotes a
position within the sub-formula~\under{\phi}{p}), and $p < q$ if $p$
is strictly above $q$.

\paragraph{Craig interpolation} is the main technique used to
construct and refine abstractions in software model checking.  A
binary interpolation problem is a conjunction~$A \wedge B$ of
constraints. A \emph{Craig interpolant} is a \constraint~$I$ such that
$A \models I$ and $B \models \neg I$, and such that $\fv{I} \subseteq
\fv{A} \cap \fv{B}$. The existence of an interpolant implies that $A
\wedge B$ is unsatisfiable. We say that a \constraintLang\ has the
\emph{interpolation property} if also the opposite holds: whenever $A
\wedge B$ is unsatisfiable, there is an interpolant~$I$.

\subsection{Horn Clauses}

To define the concept of Horn clauses, we fix a set~$\cal R$ of
uninterpreted fixed-arity \emph{\relsyms,} disjoint from $\cal P$ and $\cal F$. A \emph{Horn clause} is a
formula~$C \wedge B_1 \wedge \cdots \wedge B_n \to H$
where
\begin{itemize}
\item $C$ is a \constraint\ over $\cal F, P, X$;
\item each $B_i$ is an application~$p(t_1, \ldots, t_k)$ of a \relsym\
  $p \in \cal R$ to first-order terms over $\cal F, X$;
\item $H$ is similarly either an
  application~$p(t_1, \ldots, t_k)$ of $p \in \cal R$ to first-order terms,
  or is the \constraint\ $\mathit{false}$.
\end{itemize}
$H$ is called the \emph{head} of the clause, $C \wedge B_1 \wedge
\cdots \wedge B_n$ the \emph{body.} In case $C = \mathit{true}$, we
usually leave out $C$ and just write $B_1 \wedge \cdots \wedge B_n \to
H$. First-order variables (from $\cal X$) in a clause 
are considered implicitly universally quantified; \relsyms\ represent
set-theoretic relations over the universe~$U$ of a structure~$(U, I)
\in \cal S$. Notions like (un)satisfiability and entailment
generalise straightforwardly to formulae with \relsyms.

A \emph{\relsym\ assignment} is a mapping~$\mathit{sol} : {\cal R} \to
\Con$ that maps each $n$-ary \relsym~$p \in \cal R$ to a
\constraint~$\mathit{sol}(p) = C_p[x_1, \ldots, x_n]$ with $n$ free
variables. The \emph{instantiation}~$\mathit{sol}(h)$ of a Horn
clause~$h$ is defined by:
\begin{align*}
  \mathit{sol}\big(C \wedge p_1(\bar t_1) \wedge \cdots \wedge p_n(\bar t_n)
         \to p(\bar t)\big) &~=~
  C \wedge \mathit{sol}(p_1)[\bar t_1] \wedge \cdots \wedge
    \mathit{sol}(p_n)[\bar t_n] \to \mathit{sol}(p)[\bar t]
    \\
  \mathit{sol}\big(C \wedge p_1(\bar t_1) \wedge \cdots \wedge p_n(\bar t_n)
         \to \mathit{false}\big) &~=~
  C \wedge \mathit{sol}(p_1)[\bar t_1] \wedge \cdots \wedge
    \mathit{sol}(p_n)[\bar t_n] \to \mathit{false}
\end{align*}

\begin{definition}[Solvability]
  Let $\ClauseSet$ be a set of Horn clauses over \relsyms\ $\cal R$.
  \begin{enumerate}
  \item $\ClauseSet$ is called \emph{semantically solvable} if for
    every structure $(U, I) \in \cal S$ there is an interpretation of
    the \relsyms~$\cal R$ as set-theoretic relations over $U$ such
    the universally quantified closure~$\unicl(h)$ of every clause~$h \in \ClauseSet$ 
    holds in $(U,I)$.
  \item A $\ClauseSet$ is called \emph{syntactically solvable} if
    there is a \relsym\ assignment $\mathit{sol}$ such that for every
    structure $(U, I) \in \cal S$ and every clause~$h \in \ClauseSet$
    it is the case that $\unicl(\mathit{sol}(h))$ is satisfied.
  \end{enumerate}
\end{definition}

Note that, in the special case when $\cal S$ contains only
one structure, ${\cal S} = \{ (U,I) \}$, semantic
solvability reduces to the existence of relations
interpreting ${\cal R}$ that extend the structure $(U,I)$ in
such a way to make all clauses true. In
other words, Horn clauses are solvable in a structure if and
only if the extension of the theory of $(U,I)$ by relation symbols ${\cal R}$ 
in the vocabulary and by given Horn clauses as axioms is consistent.

Clearly, if a set of Horn clauses is syntactically solvable, then it
is also semantically solvable. The converse is not true in general,
because the solution need not be expressible in the constraint language
(see Appendix~\ref{example:nonlin} for an example).

A set~$\ClauseSet$ of Horn clauses induces a \emph{dependence
  relation}~$\to_{\ClauseSet}$ on $\cal R$, defining $p \to_{\ClauseSet} q$ if
there is a Horn clause in $\ClauseSet$ that contains $p$ in its head, and
$q$ in the body. The set~$\ClauseSet$ is called \emph{recursion-free} if
$\to_{\ClauseSet}$ is acyclic, and \emph{recursive} otherwise.  In the
next sections we study the solvability problem for recursion-free Horn
clauses. This case is relevant, since solvers for recursion-free Horn
clauses form a main component of many general Horn-clause-based
verification
systems~\cite{DBLP:conf/popl/GuptaPR11,DBLP:conf/pldi/GrebenshchikovLPR12}.


\section{Disjunctive Interpolants and Body-Disjoint Horn Clauses}
\label{sec:disjInts}

Having defined the classical notions of interpolation and Horn clauses,
we now present our notion of disjunctive interpolants, and the
corresponding class of Horn clauses.
Our inspiration are generalized forms of Craig interpolation, such as
inductive sequences of
interpolants~\cite{Henzinger:2004,DBLP:conf/cav/McMillan06} or tree
interpolants~\cite{tree-interpolants,DBLP:conf/popl/HeizmannHP10}.  We
introduce disjunctive interpolation as a new form of
interpolation that is tailored to the refinement of abstractions in
Horn clause verification, strictly generalising both inductive
sequences of interpolants and tree interpolation.  Disjunctive
interpolation problems can specify both conjunctive and disjunctive
relationships between interpolants, and are thus applicable for
simultaneous analysis of multiple paths in a program, but also
tailored to inter-procedural analysis or verification of concurrent
programs~\cite{DBLP:conf/pldi/GrebenshchikovLPR12}.

Disjunctive interpolation problems correspond to a specific fragment
of recursion-free Horn clauses, namely recursion-free body-disjoint
Horn clauses (see Sect.~\ref{sec:bodyDisjointClauses}).  The
definition of disjunctive interpolation is chosen deliberately to be
as general as possible, while still avoiding the high computational
complexity of solving general systems of recursion-free Horn
clauses. Computational complexity is discussed in
Sect.~\ref{sec:complexity}.

We introduce disjunctive interpolants as a means of \emph{sub-formula
  abstraction}. For example, given an unsatisfiable \constraint~$\phi[\alpha]$
containing $\alpha$ as a sub-formula in a positive position, the goal is to
find an abstraction~$\alpha'$ such that $\alpha \models \alpha'$ and
$\alpha[\alpha'] \models \mathit{false}$, and such that $\alpha'$ only
contains variables common to $\alpha$ and $\phi[\mathit{true}]$. Generalizing
this to any number of subformulas, we obtain the following.

\begin{definition}[Disjunctive interpolant]
  \label{def:disjInt}
  Let $\phi$ be a constraint, and $\mathit{pos} \subseteq
  \positions{\phi}$ a set of positions in $\phi$ that are only
  underneath the connectives~$\wedge$ and $\vee$. 
  A \emph{disjunctive interpolant} is a map~$I : \mathit{pos} \to \Con$ from
  positions to constraints such that:
  \begin{enumerate}
  \item For each position~$p \in \mathit{pos}$, with direct children\\
    $\{q_1, \ldots, q_n\}
      ~=~ \{q \in \mathit{pos} \mid p < q \text{~and~} \lnot\exists r
      \in \mathit{pos}.\; p < r < q\}$ \ we have
    \begin{equation*}
      \under{\big(\phi[q_1/I(q_1), \ldots, q_n/I(q_n)]\big)}{p} ~\models~ I(p)~,
    \end{equation*}
  \item For the topmost positions~$\{q_1, \ldots, q_n\} ~=~ \{q \in
    \mathit{pos} \mid \lnot\exists r \in \mathit{pos}.\; r < q\}$ \ we have
    \begin{equation*}
      \phi[q_1/I(q_1), \ldots, q_n/I(q_n)] ~\models~ \mathit{false}~,
    \end{equation*}
  \item For each position~$p \in \mathit{pos}$, \ we have \ \ $\fv{I(p)}
    \subseteq \fv{\under{\phi}{p}} \cap
    \fv{\phi[p/\mathit{true}]}$.
  \end{enumerate}
\end{definition}

\begin{example}
  Consider $A_p \wedge B$, with position~$p$ pointing to the
  sub-formula~$A$, and $\mathit{pos} = \{p\}$. The disjunctive
  interpolants for $A \wedge B$ and $\mathit{pos}$ coincide with the
  ordinary binary interpolants for $A \wedge B$.
\end{example}

\begin{example}
  Consider the formula $\phi =
  \big(\cdots\big(\big(\big(T_1\big)_{p_1} \wedge T_2\big)_{p_2}
  \wedge T_3\big)_{p_3} \wedge \cdots \big)_{p_{n-1}} \wedge T_n$ and
  positions~$\mathit{pos} = \{p_1, \ldots, p_{n-1}\}$. Disjunctive
  interpolants for $\phi$ and $\mathit{pos}$ correspond to inductive
  sequences of
  interpolants~\cite{Henzinger:2004,DBLP:conf/cav/McMillan06}. Note
  that we have the entailments\\
  $T_1 \models I(p_1)$,~
    $I(p_1) \wedge T_2 \models I(p_2)$,~ \ldots,~
    $I(p_{n-1}) \wedge T_n \models \mathit{false}$.
\end{example}

\begin{example}
  \label{ex:tree-interpolants}
  A tree interpolation
  problem~\cite{tree-interpolants,DBLP:conf/popl/HeizmannHP10} is
  given by a finite directed tree~$(V, E)$, writing $E(v, v')$ to
  express that the node~$v'$ is a direct child of $v$, together with a
  function $\phi : V \to \Con$ that labels each node~$v$ of the tree
  with a \constraint~$\phi(v)$. A \emph{tree interpolant} is a
  function~$I : V \to \Con$ such that
  \begin{inparaenum}
  \item $I(v_0) = \mathit{false}$ for the root node $v_0 \in V$,
  \item for any node~$v \in V$, the entailment $ \phi(v) \wedge
    \bigwedge_{(v, w) \in E} I(w) \models I(v), $ holds, and
  \item for any node~$v \in V$, every variable in $I(v)$ occurs both
    in some formula~$\phi(w)$ for $w$ such that $E^*(v, w)$, and in
    some formula~$\phi(w')$ for some $w'$ such that $\neg E^*(v,
    w')$. ($E^*$ is the reflexive transitive close of $E$).
  \end{inparaenum}

  It can be shown that a tree interpolant~$I$ exists if and only if
  $\bigwedge_{v \in V} \phi(v)$ is unsatisfiable.
  Tree interpolation
  problems~\cite{tree-interpolants,DBLP:conf/popl/HeizmannHP10}
  correspond to disjunctive interpolation with a set~$\mathit{pos}$ of
  positions that are only underneath~$\wedge$ (and never underneath
  $\vee$).
\end{example}

\begin{example}
  \label{ex:mergeExpansion}
  We consider the example given in Fig.~\ref{fig:extendedApprox},
  Sect.~\ref{sec:example}. To compute a solution for the Horn
  clauses, we first \emph{expand} the Horn clauses into a constraint,
  by means of exhaustive inlining (see Sect.~\ref{sec:rf-horn}), 
  obtaining a disjunctive interpolation problem:
  \begin{align*}
    \mathit{false} & ~~\leadsto~~
    \textsf{merge1}(X,Y,Z) \wedge Z > X + Y
    \\
    & ~~\leadsto~~
    \left(
      \begin{array}{c}
        Y1 = Y - 1 \wedge \textsf{merge}(X, Y1, Z1) \wedge Z = Z1 + 1\\ \vee\\
        X1 = X - 1 \wedge \textsf{merge}(X1, Y, Z1) \wedge Z = Z1 + 1
      \end{array}
    \right) \wedge Z > X + Y
    \\ &  ~~\leadsto~~
    \left(
      \begin{array}{c}
        Y1 = Y - 1 \wedge (X = 0 \wedge Y1 >= 0 \wedge Z1 = Y1)_{q}  \wedge Z = Z1 + 1\\ \vee\\
        X1 = X - 1 \wedge (X1 = 0 \wedge Y >= 0 \wedge Z1 = Y)_{r} \wedge Z = Z1 + 1
      \end{array}
    \right)_p \wedge Z > X + Y
  \end{align*}
  In the last formula, the positions~$p, q, r$ corresponding to the
  \relsym\ \textsf{merge1} and the two occurrences of \textsf{merge}
  are marked. It can be observed that the last formula is
  unsatisfiable, and that~$I = \{ p \mapsto X + Y \geq Z,\; q \mapsto
  X + Y1 \geq Z1,\; r \mapsto X1 + Y \geq Z1\}$ is a disjunctive
  interpolant. A solution for the Horn clauses can be derived from the
  interpolant by conjoining the constraints derived for the two
  occurrences of \textsf{merge}:
  \begin{equation*}
    \textsf{merge1}(X, Y, Z) = X + Y \geq Z,\quad
    \textsf{merge}(X, Y, Z) \begin{array}[t]{l}
                            = X + Y \geq Z \wedge X + Y \geq Z \\
                            = X + Y \geq Z
                            \end{array}
  \end{equation*}
\end{example}

\begin{theorem}
  \label{thm:disjInts}
  Suppose $\phi$ is a constraint, and suppose $\mathit{pos} \subseteq
  \positions{\phi}$ is a set of positions in $\phi$ that are only
  underneath the connectives~$\wedge$ and $\vee$. If \Con\ is a
  \constraintLang\ that has the interpolation property, then a
  disjunctive interpolant~$I$ exists for $\phi$ and $\mathit{pos}$ if
  and only if $\phi$ is unsatisfiable.
\end{theorem}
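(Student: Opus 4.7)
The plan is to prove the two implications separately. The ``only if'' direction, that existence of a disjunctive interpolant forces $\phi$ to be unsatisfiable, is a routine bottom-up induction over the positions in $\mathit{pos}$: condition~(1) of Definition~\ref{def:disjInt}, combined with monotonicity of $\wedge$ and $\vee$, allows one to successively replace each subformula $\under{\phi}{p}$ by $I(p)$ while preserving the entailment from $\phi$, and condition~(2) then delivers $\phi \models \mathit{false}$.

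For the ``if'' direction I would proceed by induction on the cardinality of $\mathit{pos}$. The base case $\mathit{pos} = \emptyset$ is immediate: the empty map is an interpolant and condition~(2) reduces to the assumed unsatisfiability of $\phi$. For the inductive step, pick any position $p \in \mathit{pos}$ that is a $\mathit{pos}$-leaf (no $q \in \mathit{pos}$ with $p < q$). The key observation is that, since $p$ occurs only under $\wedge$ and $\vee$, the map $x \mapsto \phi[p/x]$ is monotonic, and hence the propositional identity
\begin{equation*}
  \phi ~\equiv~ \big(\under{\phi}{p} \wedge \phi[p/\mathit{true}]\big) \vee \phi[p/\mathit{false}]
\end{equation*}
holds. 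Since $\phi$ is unsatisfiable, each of the two disjuncts on the right is individually unsatisfiable. Applying the interpolation property of $\Con$ to $\under{\phi}{p} \wedge \phi[p/\mathit{true}]$ yields a binary Craig interpolant $J$ with $\under{\phi}{p} \models J$, with $J \wedge \phi[p/\mathit{true}] \models \mathit{false}$, and with $\fv{J} \subseteq \fv{\under{\phi}{p}} \cap \fv{\phi[p/\mathit{true}]}$. Setting $I(p) := J$ and invoking the identity once more shows that $\phi[p/I(p)]$ is also unsatisfiable, and its set of $\mathit{pos}$-positions reduces to $\mathit{pos} \setminus \{p\}$ (still lying only under $\wedge$/$\vee$, since the surrounding connectives are untouched).

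Now the induction hypothesis supplies interpolants for the remaining positions in $\phi[p/I(p)]$, which combined with $I(p)$ produces the desired map. Verifying conditions~(1) and (2) of Definition~\ref{def:disjInt} is straightforward by unwinding: condition~(1) at $p$ holds because $\under{\phi}{p} \models I(p)$ was built into $J$, while for the remaining positions it is inherited from the recursive call on $\phi[p/I(p)]$, using that $p$ is a leaf so that no position in $\mathit{pos} \setminus \{p\}$ lies below $p$.

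The step requiring the most care is the variable condition~(3) when combining the recursive interpolants with $I(p)$. Concretely, for a position $q \in \mathit{pos} \setminus \{p\}$ with $q < p$ (i.e.\ $p$ is inside $q$'s subformula), one needs $\fv{\under{\phi[p/I(p)]}{q}} \subseteq \fv{\under{\phi}{q}}$, which follows from $\fv{I(p)} \subseteq \fv{\under{\phi}{p}} \subseteq \fv{\under{\phi}{q}}$; and for $q$ incomparable to $p$, one needs $\fv{\phi[p/I(p)][q/\mathit{true}]} \subseteq \fv{\phi[q/\mathit{true}]}$, which again follows from $\fv{I(p)} \subseteq \fv{\under{\phi}{p}}$. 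Once these containments are dispatched, the recursion closes and the theorem follows.
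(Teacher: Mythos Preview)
Your proof is correct, and it takes a genuinely different route from the paper's argument.

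For the ``$\Leftarrow$'' direction, the paper does \emph{not} argue by induction on $|\mathit{pos}|$. Instead it performs a single global reduction to \emph{tree interpolation}: it introduces a fresh Boolean flag~$a_p$ for every $p\in\mathit{pos}$, builds the tree~$(\mathit{pos},E)$ of positions, and labels each node~$p$ with $\neg a_p \vee \under{(\phi[q_1/a_{q_1},\ldots,q_n/a_{q_n}])}{p}$ (children $q_i$). Unsatisfiability of $\phi$ translates into unsatisfiability of the conjunction of labels, so a tree interpolant~$I_T$ exists; the paper then argues that each $I_T(p)$ has the shape $\neg a_p \vee I_p$, and the $I_p$ form the desired disjunctive interpolant.

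Your approach is more elementary: you peel off one leaf position~$p$ at a time using the monotone Shannon identity $\phi \equiv (\under{\phi}{p} \wedge \phi[p/\mathit{true}]) \vee \phi[p/\mathit{false}]$ and a single binary Craig interpolation, then recurse on $\phi[p/I(p)]$. This has two advantages over the paper's proof: it appeals only to binary interpolation rather than to tree interpolation as a black box, and it avoids the auxiliary Boolean variables (so it does not implicitly assume that the constraint language can absorb fresh propositional atoms). The paper's reduction, on the other hand, is closer to an implementation recipe---it hands the whole problem to an existing tree-interpolation engine in one call rather than making $|\mathit{pos}|$ sequential interpolation calls---and it makes the relationship between disjunctive and tree interpolation explicit. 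Your verification of the free-variable condition~(3) via the containments $\fv{I(p)}\subseteq\fv{\under{\phi}{p}}$ is the right bookkeeping and goes through as you describe.
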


\begin{proof}
  ``$\Rightarrow$'' By means of simple induction, we can derive that
  $\under{\phi}{p} \models I(p)$ holds for every disjunctive
  interpolant~$I$ for $\phi$ and $\mathit{pos}$, and for every $p \in
  \mathit{pos}$. From Def.~\ref{def:disjInt}, it then follows that
  $\phi$ is unsatisfiable.

  ``$\Leftarrow$'' Suppose $\phi$ is unsatisfiable. We encode the
  disjunctive interpolation problem into a (conjunctive) tree
  interpolation
  problem~\cite{tree-interpolants,DBLP:conf/popl/HeizmannHP10} (also
  see Example~\ref{ex:tree-interpolants}) by adding auxiliary Boolean
  variables.  Wlog, we assume that $\mathit{pos}$ contains the root
  position~$\mathit{root}$ of $\phi$. The graph of the tree
  interpolation problem is $(\mathit{pos}, E)$, with the edge relation
  $E = \{(p, q) \mid p < q \text{~and~} \lnot\exists r. p < r < q
  \}$. For every $p \in \mathit{pos}$, let $a_p$ be a fresh Boolean
  variable. We label the nodes of the tree using the function~$\phi_L
  : \mathit{pos} \to \Con$. For each position~$p \in \mathit{pos}$,
  with direct children $\{q_1, \ldots, q_n\} ~=~ \{q \in \mathit{pos}
  \mid E(p, q)\}$ we define
  \begin{equation*}
    \phi_L(p) ~=~
    \begin{cases}
      \phi[q_1/a_{q_1}, \ldots, q_n/a_{q_n}] &
      \text{if~} p = \mathit{root}
      \\
      \neg a_p \vee \under{\big(\phi[q_1/a_{q_1}, \ldots, q_n/a_{q_n}]\big)}{p}
      & \text{otherwise}
    \end{cases}
  \end{equation*}
  Observe that $\bigwedge_{p \in \mathit{pos}} \phi_L(p)$ is
  unsatisfiable. As explained in Example~\ref{ex:tree-interpolants}, a
  tree interpolant~$I_T$ exists for this labelling function. By
  construction, for non-root positions~$p \in \mathit{pos} \setminus
  \{\mathit{root}\}$ the interpolant labelling is equivalent to
  $I_T(p) \equiv \neg a_p \vee I_p$, where $I_p$ does not contain any
  further auxiliary Boolean variables. We can then construct a
  disjunctive interpolant~$I$ for the original problem as
  \begin{equation*}
    I(p) ~=~
    \begin{cases}
      \mathit{false} & \text{if~} p = \mathit{root}
      \\
      I_p & \text{otherwise}
    \end{cases}
  \end{equation*}
  To see that $I$ is a disjunctive interpolant, observe that for each
  position~$p \in \mathit{pos}$ with direct children $\{q_1, \ldots,
  q_n\} ~=~ \{q \in \mathit{pos} \mid E(p, q)\}$ the following
  entailment holds (since $I_T$ is a tree interpolant):\quad
  $\phi_L(p) \wedge
  (\neg a_{q_1} \vee I_{q_1}) \wedge \cdots \wedge (\neg a_{q_n} \vee I_{q_n})
  ~\models~
  I_T(p)$
  \\
  Via Boolean reasoning this implies:\quad
  $\under{\big(\phi[q_1/I_{q_1}, \ldots, q_n/I_{q_n}]\big)}{p} ~\models~ I(p)$.
  \qed
\end{proof}

\subsection{Solvability of Body-Disjoint Horn Clauses}
\label{sec:bodyDisjointClauses}

The relationship between Craig interpolation and (syntactic) solutions
of Horn clauses has been observed in
\cite{DBLP:conf/aplas/GuptaPR11}. Disjunctive interpolation
corresponds to a specific class of recursion-free Horn clauses, namely
Horn clauses that are \emph{body disjoint:}
\begin{definition}
  A finite, recursion-free set~$\ClauseSet$ of Horn clauses is
  \emph{body disjoint} if for each \relsym~$p$ there is at most one
  clause containing $p$ in its body, and every clause
  contains $p$ at most once.
\end{definition}
An example for body-disjoint clauses is the subset~$\{(1), (2), (5)\}$
of clauses in Fig.~\ref{fig:merge}. Syntactic solutions of a
set~$\ClauseSet$ of body-disjoint Horn clauses can be computed by
solving a disjunctive interpolation problem; vice versa, every
disjunctive interpolation problem can be translated into an equivalent
set of body-disjoint clauses.

In order to extract an interpolation problem from $\ClauseSet$, we
first normalise the clauses: for every \relsym\ $p \in
\cal R$, we fix a unique vector of variables~$\bar x_p$, and rewrite
$\ClauseSet$ such that $p$ only occurs in the form~$p(\bar x_p)$. This
is possible due to the fact that $\ClauseSet$ is body disjoint.  The
translation from Horn clauses to a disjunctive interpolation problem
is done recursively, similar in spirit to inlining of function
invocations in a program; thanks to body-disjointness, the encoding is
polynomial.
\begin{align*}
  \mathit{enc}\big({\ClauseSet}\big) &~=~
  \bigvee_{( C \wedge B_1 \wedge \cdots \wedge B_n \to \mathit{false}) \,\in {\ClauseSet}}
  C \wedge \mathit{enc}'(B_1) \wedge \cdots \wedge \mathit{enc}'(B_n)
  \\
  \mathit{enc}'\big(p(\bar x_p)\big) &~=~
  \left(
  \bigvee_{( C \wedge B_1 \wedge \cdots \wedge B_n \to p(\bar x_p)) \,\in {\ClauseSet}}
  C \wedge \mathit{enc}'(B_1) \wedge \cdots \wedge \mathit{enc}'(B_n)
  \right)_{l_p}
\end{align*}
Note that the resulting formula~$\mathit{enc}(\ClauseSet)$ contains a
unique position~$l_p$ at which the definition of a \relsym~$p$ is
inlined; in the second equation, this position is marked with
$l_p$. Any disjunctive interpolant~$I$ for this set of positions
represents a syntactic solution of $\ClauseSet$, and vice versa.


\section{Solvability of Recursion-free Horn Clauses}
\label{sec:rf-horn}

The previous section discussed how the class of recursion-free
body-disjoint Horn clauses can be solved by reduction to disjunctive
interpolation. We next show that this construction can be generalised
to arbitrary systems of recursion-free Horn clauses. In absence of the
body-disjointness condition, however, the encoding of Horn clauses as
interpolation problems can incur a potentially exponential blowup. We
give a complexity-theoretic argument justifying that this blowup
cannot be avoided in general. This puts disjunctive interpolation (and,
equivalently, body-disjoint Horn clauses) at a sweet spot: preserving
the relatively low complexity of ordinary binary Craig
interpolation, while carrying much of the flexibility of the Horn
clause framework.

We first introduce the exhaustive
\emph{expansion}~$\mathit{exp}(\ClauseSet)$ of a set~$\ClauseSet$ of
Horn clauses, which generalises the Horn clause encoding from the
previous section. We write $C' \wedge B_1' \wedge \cdots \wedge B_n'
\to H'$ for a fresh variant of a Horn clause $C \wedge B_1 \wedge
\cdots \wedge B_n \to H$, i.e., the clause obtained by replacing all
free first-order variables with fresh variables. Expansion is then
defined by the following recursive functions:
\begin{align*}
  \mathit{exp}\big({\ClauseSet}\big) &~=~
  \bigvee_{( C \wedge B_1 \wedge \cdots \wedge B_n \to \mathit{false}) \,\in {\ClauseSet}}
  C' \wedge \mathit{exp}'(B_1') \wedge \cdots \wedge \mathit{exp}'(B_n')
  \\
  \mathit{exp}'\big(p(\bar t)\big) &~=~
  \bigvee_{( C \wedge B_1 \wedge \cdots \wedge B_n \to p(\bar s)) \,\in {\ClauseSet}}
  C' \wedge \mathit{exp}'(B_1') \wedge \cdots \wedge \mathit{exp}'(B_n')
  \wedge \bar t = \bar s'
\end{align*}
Note that $\mathit{exp}$ is only well-defined for finite and
recursion-free sets of Horn clauses, since the expansion might not
terminate otherwise.

\begin{theorem}[Solvability of recursion-free Horn clauses]
  \label{thm:solvability}
  Let $\ClauseSet$ be a finite, recursion-free set of Horn clauses. If the
  underlying \constraintLang\ has the interpolation property, then the
  following statements are equivalent:
  \begin{enumerate}
  \item $\ClauseSet$ is semantically solvable;
  \item $\ClauseSet$ is syntactically solvable;
  \item $\mathit{exp}({\ClauseSet})$ is unsatisfiable.
  \end{enumerate}
\end{theorem}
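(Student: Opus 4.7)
The plan is to prove the cycle (2)~$\Rightarrow$~(1)~$\Rightarrow$~(3)~$\Rightarrow$~(2). The first implication is immediate from the definitions: given a syntactic solution $\mathit{sol}$ and a structure $(U, I) \in \cal S$, interpret each $p \in \cal R$ as the set-theoretic relation $\{\bar a \mid \mathit{sol}(p)[\bar a]$ holds in $(U, I)\}$; then $\unicl(h)$ holds in $(U, I)$ for every $h \in \ClauseSet$ by construction. For (1)~$\Rightarrow$~(3) I would argue contrapositively. Suppose $\mathit{exp}(\ClauseSet)$ is satisfied by some assignment~$\beta$ in some $(U, I) \in \cal S$. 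The nested disjunctions of $\mathit{exp}$ and $\mathit{exp}'$ mirror the choice of a clause for each unfolded \relsym\ occurrence, and reading off the choices selected by $\beta$ yields a finite ground derivation tree rooted at a $\mathit{false}$-headed clause whose leaves are \constraints\ true under $\beta$. A straightforward induction on this tree shows that any purported semantic solution $\rho$ in $(U, I)$ must label every intermediate head true, forcing $\mathit{false}$ at the root, a contradiction.

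The main step is (3)~$\Rightarrow$~(2), which I would reduce to the body-disjoint case already handled in Sect.~\ref{sec:bodyDisjointClauses}. The idea is to transform $\ClauseSet$ into an equivalent, but possibly much larger, body-disjoint set $\ClauseSet^\ast$ by duplicating \relsyms\ along the dependence DAG. Concretely, process \relsyms\ in reverse topological order starting from the $\mathit{false}$-headed clauses: for every occurrence of a \relsym~$p$ in a body of a clause already added to $\ClauseSet^\ast$, introduce a fresh copy~$p^{(j)}$ and add (renamed) copies of all clauses of $\ClauseSet$ defining $p$, with $p^{(j)}$ as head and further duplication cascading into their bodies. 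Since $\to_{\ClauseSet}$ is acyclic this terminates, yielding a body-disjoint set with $\mathit{exp}(\ClauseSet^\ast) \equiv \mathit{exp}(\ClauseSet)$ up to renaming of fresh first-order variables. In particular $\mathit{exp}(\ClauseSet^\ast)$ is unsatisfiable, so Theorem~\ref{thm:disjInts} together with the body-disjoint argument from Sect.~\ref{sec:bodyDisjointClauses} yields a syntactic solution $\mathit{sol}^\ast$ of $\ClauseSet^\ast$. I then define $\mathit{sol}(p) = \bigwedge_{j} \mathit{sol}^\ast(p^{(j)})$ for each original \relsym~$p$, where $j$ ranges over the duplicates introduced for $p$. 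This is a syntactic solution for $\ClauseSet$: each clause $h \in \ClauseSet$ lifts to one or more clauses $h^{(j)} \in \ClauseSet^\ast$ known to be valid under $\mathit{sol}^\ast$, and since $\mathit{sol}(q) \models \mathit{sol}^\ast(q^{(k)})$ for every body \relsym~$q$ and every duplicate~$q^{(k)}$, the entailment required for $h$ under $\mathit{sol}$ follows from the entailments for the $h^{(j)}$, taking the conjunction over the duplicates of the head \relsym.

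The main obstacle will be the bookkeeping in (3)~$\Rightarrow$~(2): one must verify precisely that the duplication procedure preserves the expansion formula up to renaming of fresh variables, and that conjoining the per-duplicate solutions composes correctly across all lifted clauses (including the $\mathit{false}$-headed ones, where a single duplicate suffices). The duplication blows $|\ClauseSet^\ast|$ up exponentially in the worst case, which is harmless for mere solvability but foreshadows the complexity gap (coNP for disjunctive interpolation vs.\ co-NEXPTIME for arbitrary recursion-free Horn clauses) investigated in Sect.~\ref{sec:complexity}.
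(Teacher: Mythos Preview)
Your proposal is correct and follows essentially the same route as the paper: the same cycle $(2)\Rightarrow(1)\Rightarrow(3)\Rightarrow(2)$, with the contrapositive argument for $(1)\Rightarrow(3)$ and, for $(3)\Rightarrow(2)$, the reduction to the body-disjoint case by duplicating relation symbols and then recovering a solution for the original system as the conjunction over duplicates. The paper's appendix spells out exactly this duplication-and-conjoin construction (together with a normalisation step fixing argument vectors~$\bar x_p$), so your anticipated bookkeeping obstacle is the only real work, and your observation about the exponential blowup matches the paper's motivation for the complexity section.
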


\begin{proof}
  $2 \Rightarrow 1$ holds because a syntactic solution gives rise to a
  semantic solution by interpreting the solution constraints.  $\neg 3
  \Rightarrow \neg 1$ holds because a model of
  $\mathit{exp}(\ClauseSet)$ witnesses domain elements that every
  semantic solution of $\ClauseSet$ has to contain, but which violate
  at least one clause of the form $C \wedge B_1 \wedge \cdots \wedge
  B_n \to \mathit{false}$, implying that no semantic solution can
  exist.  $3 \Rightarrow 2$ is shown by encoding $\ClauseSet$ into a
  disjunctive interpolation problem (Sect.~\ref{sec:disjInts}), which
  can solved with the help of Theorem~\ref{thm:disjInts}. To this end,
  clauses are first duplicated to obtain a problem that is
  body-disjoint, and subsequently normalised as described in
  Sect.~\ref{sec:bodyDisjointClauses}. More details are given in
  Appendix~\ref{app:rec-free-horns}.\qed
\end{proof}

\subsection{The Complexity of Recursion-free Horn Clauses}
\label{sec:complexity}

Theorem~\ref{thm:solvability} gives rise to a general algorithm for
(syntactically) solving recursion-free sets~$\ClauseSet$ of Horn clauses,
over \constraintLangs\ for which interpolation procedures are
available. The general algorithm requires, however, to generate and
solve the expansion~$\mathit{exp}(\ClauseSet)$ of the Horn clauses, which
can be exponentially bigger than $\ClauseSet$ (in case $\ClauseSet$ is not body disjoint), and might therefore require exponential
time. This leads to the question whether more efficient algorithms are
possible for solving Horn clauses.

We give a number of complexity results about (semantic) Horn clause
solvability; proofs of the results are given in
Appendix~\ref{app:PSPACE-hard}, \ref{app:s-exp}, and
\ref{app:NEXPTIME-hard}. Most importantly, we can observe that
solvability is PSPACE-hard, for every non-trivial constraint
language~\Con:\footnote{A similar observation was made in the
  introduction of \cite{DBLP:conf/cav/LalQL12}, for the case of
  programs with procedures.}
\begin{lemma}
  \label{lem:PSPACE-hard}
  Suppose a constraint language can distinguish at least two values,
  i.e., there are two ground terms~$t_0$ and $t_1$ such that $t_0
  \not= t_1$ is satisfiable. Then the semantic solvability problem for
  recursion-free Horn clauses is PSPACE-hard.
\end{lemma}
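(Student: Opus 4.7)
The plan is to reduce from the canonical PSPACE-complete problem TQBF. Given a quantified Boolean formula $\Psi \;=\; Q_1 x_1 \cdots Q_n x_n.\,\phi(x_1,\ldots,x_n)$ with $\phi$ in conjunctive normal form, I would construct in polynomial time a recursion-free set $\ClauseSet$ of Horn clauses that is semantically solvable if and only if $\Psi$ holds. The two distinguishable ground terms $t_0,t_1$ play the role of the Boolean values.

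First, I would introduce one fresh \relsym{} $T_i$ of arity $i$ for each $i\in\{0,1,\ldots,n\}$, with the informal reading ``$T_i(\bar x)$ holds iff the suffix $Q_{i+1} x_{i+1}\cdots\phi$ is false under the partial assignment $\bar x$''. The clauses are then built layer by layer. At the base, for each conjunct $C_j$ of $\phi$ with literals over variables $x_{i_1},\ldots,x_{i_k}$ and polarities $p_1,\ldots,p_k$, add
\[
T_n(x_1,\ldots,x_n) \;\leftarrow\; (x_{i_1} = t_{1-p_1}) \wedge \cdots \wedge (x_{i_k} = t_{1-p_k}) \wedge (t_0 \neq t_1),
\]
so that, in any structure where the guard $t_0\neq t_1$ holds, $T_n$ collects precisely the assignments falsifying $\phi$. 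For the inductive layer at each $i<n$, branch on $Q_{i+1}$: if $Q_{i+1}=\forall$, add the two clauses $T_i(\bar x)\leftarrow T_{i+1}(\bar x,t_0)$ and $T_i(\bar x)\leftarrow T_{i+1}(\bar x,t_1)$; if $Q_{i+1}=\exists$, add the single clause $T_i(\bar x)\leftarrow T_{i+1}(\bar x,t_0)\wedge T_{i+1}(\bar x,t_1)$. These are the Horn-encoded negations of $\forall$ and $\exists$, respectively. Close with the error clause $\mathit{false}\leftarrow T_0$. The result has $O(n+|\phi|)$ clauses, the dependence relation $T_i\to_{\ClauseSet} T_{i+1}$ is visibly acyclic, and every constraint used is just a conjunction of (dis)equalities involving $t_0$ and $t_1$.

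For correctness, I would argue structure by structure. Fix $(U,I)\in\cal S$. If $t_0^I\neq t_1^I$, the guard is true and a routine induction on $n-i$ shows that the least interpretation of each $T_i$ on tuples from $\{t_0^I,t_1^I\}^i$ equals the set of prefixes making the corresponding suffix of $\Psi$ false; since the clauses for $T_i$ only ever query $T_{i+1}$ at the two points $t_0$ and $t_1$, tuples containing other elements of $U$ never propagate down to $T_0$. Hence the minimum interpretation makes the propositional atom $T_0$ true exactly when $\Psi$ is false, and as every model is pointwise above the minimum, the error clause $\mathit{false}\leftarrow T_0$ is satisfiable in this structure iff $\Psi$ is true. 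If instead $t_0^I=t_1^I$, the guard $t_0\neq t_1$ is unsatisfiable in the structure, every base body is unsatisfiable, and interpreting each $T_i$ as $\emptyset$ already satisfies every clause. Combining both cases, $\ClauseSet$ is semantically solvable iff $\Psi$ is true, yielding PSPACE-hardness.

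The main obstacle I expect is the treatment of structures that identify $t_0$ with $t_1$: without the disequality guard, the base relations become vacuously non-empty in such structures, and the whole reduction would produce unconditionally unsolvable clause sets, breaking the correspondence with $\Psi$. The remaining work is the standard least-fixed-point induction showing that $T_i^{\ast}$ encodes the negation of the relevant quantifier suffix, together with noting that this polynomial-time reduction from TQBF transports PSPACE-hardness to semantic solvability.
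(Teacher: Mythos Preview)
Your reduction is correct and follows essentially the same approach as the paper: encode a QBF into recursion-free Horn clauses by introducing one \relsym{} per quantifier depth, translating one quantifier type as two separate clauses and the other as a single clause with a conjunctive body, and using the distinguished terms $t_0,t_1$ as Boolean values. The only differences are cosmetic dualities: the paper encodes \emph{truth} of the suffix (so $\exists$ yields two clauses and $\forall$ a conjunction, with solvability corresponding to the QBF being \emph{false}), whereas you encode \emph{falsity} (swapping the roles, with solvability corresponding to the QBF being \emph{true}); and the paper places the guard $t_0\neq t_1$ once in the final error clause rather than in each base clause.
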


Looking for upper bounds, it is easy to see that solvability of Horn
clauses is in co-NEXPTIME for any \constraintLang\ with satisfiability
problem in NP (for instance, quantifier-free Presburger
arithmetic). This is because the size of the
expansion~$\mathit{exp}({\ClauseSet})$ is at most exponential in the
size of $\ClauseSet$. Individual \constraintLangs\ admit more
efficient solvability checks:
\begin{theorem}
  Semantic solvability of recursion-free Horn clauses over the
  \constraintLang\ of Booleans is PSPACE-complete.
\end{theorem}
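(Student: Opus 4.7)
My plan is to treat the two bounds separately. For the lower bound, PSPACE-hardness is immediate from Lemma~\ref{lem:PSPACE-hard}: the Boolean constraint language distinguishes the ground terms $\mathit{true}$ and $\mathit{false}$, so the hypothesis of that lemma is trivially satisfied and its reduction applies verbatim.

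For the upper bound, I would first invoke Theorem~\ref{thm:solvability} to reduce semantic solvability of a recursion-free set~$\ClauseSet$ of Horn clauses to unsatisfiability of the expansion $\mathit{exp}(\ClauseSet)$. The expansion has a natural AND-OR tree structure, with the outer disjunction over error clauses and each $\mathit{exp}'(p(\bar t))$ forming OR-nodes, and each body conjunction forming an AND-node. Although $\mathit{exp}(\ClauseSet)$ may be of exponential size, the depth of this tree is bounded by the length of the longest chain in $\to_{\ClauseSet}$, and hence by $|\mathcal{R}|$, since $\ClauseSet$ is recursion-free. I would then describe an alternating polynomial-time procedure for satisfiability of $\mathit{exp}(\ClauseSet)$ and conclude via AP $=$ PSPACE. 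To check the outermost formula, the procedure existentially selects an error clause $C \wedge B_1 \wedge \cdots \wedge B_n \to \mathit{false}$ together with a Boolean assignment to the fresh first-order variables of a variant of that clause, verifies $C$ locally, and then universally picks one body atom $B_i = p_i(\bar t_i)$ and recurses, passing the concrete tuple $\bar t_i$ fixed by the current assignment. To evaluate $\mathit{exp}'(p(\bar v))$ for a concrete Boolean tuple $\bar v$, the procedure existentially picks a defining clause $C'' \wedge B_1'' \wedge \cdots \wedge B_m'' \to p(\bar s'')$ of $p$, guesses a Boolean assignment to that clause's fresh variables which satisfies $\bar s'' = \bar v$ and $C''$, and then universally recurses into its body atoms.

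The step I expect to be the main obstacle is justifying the soundness of universal branching across AND-nodes, since alternation evaluates each body atom in isolation whereas satisfiability of $\mathit{exp}(\ClauseSet)$ asks for a single simultaneous assignment to all of its variables. This is where the fresh-variant construction in the definition of $\mathit{exp}'$ pays off: variables appearing in distinct branches are disjoint apart from the argument tuples explicitly passed downward, so any winning strategy for the alternating machine can be reassembled into a globally consistent satisfying assignment, and conversely. Each activation frame then stores only polynomially many bits (the current argument tuple and the freshly guessed clause variables), the recursion depth is at most $|\mathcal{R}|$, and hence both workspace and alternation count are polynomial, giving the PSPACE upper bound.
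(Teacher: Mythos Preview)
Your proof is correct. The lower bound is identical to the paper's (both invoke Lemma~\ref{lem:PSPACE-hard}), and the upper bound rests on the same structural insight---that the expansion $\mathit{exp}(\ClauseSet)$ has depth bounded by $|\mathcal{R}|$ and that distinct subtrees share only the argument tuples passed down---but you package it differently.

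The paper's upper bound goes through a \emph{succinct expansion} lemma: it constructs, in linear time, a quantified Boolean formula $\mathit{sexp}(\ClauseSet)$ equivalent to $\mathit{exp}(\ClauseSet)$, processing relation symbols in topological order and introducing, for each $p$, a block of universally quantified argument variables together with existentially quantified body variables and Boolean selector flags. The number of quantifier alternations is at most $2|\mathcal{R}|$, so the result follows from QBF $\in$ PSPACE. Your argument instead describes an alternating polynomial-time machine directly and invokes AP $=$ PSPACE. These are dual presentations of the same idea: your existential guess of a defining clause plus assignment corresponds to the paper's existential block of body variables and flags, and your universal choice of a body atom corresponds to the paper's universal quantification over argument variables combined with the guard disjunction. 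Your route is arguably more elementary, since it avoids spelling out the formula encoding and the bookkeeping of $\mathit{requiredChecks}$; the paper's route has the side benefit of yielding an explicit QBF instance that could be handed to a solver.
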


\ConstraintLangs\ that are more expressive than the Booleans lead to a
significant increase in the complexity of solving Horn clauses. The 
lower bound in the following theorem can be shown by simulating time-bounded 
non-deterministic Turing machines.
\begin{theorem}
  \label{thm:co-NEXPTIME-complete}
  Semantic solvability of recursion-free Horn clauses over the
  \constraintLang\ of quantifier-free Presburger arithmetic is
  co-NEXPTIME-complete.
\end{theorem}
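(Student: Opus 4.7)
The upper bound follows immediately from Theorem~\ref{thm:solvability}. By that result, semantic solvability of a recursion-free $\ClauseSet$ is equivalent to unsatisfiability of $\mathit{exp}(\ClauseSet)$. The expansion has size at most exponential in $|\ClauseSet|$ (each relation symbol is inlined once per root-to-leaf path in the dependence DAG) and is itself a quantifier-free Presburger formula. Since QFPA satisfiability is in NP, unsatisfiability of such an exponentially large formula is decidable in co-NEXPTIME.

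For the lower bound I would reduce from the canonical NEXPTIME-complete problem of deciding whether a non-deterministic Turing machine $M$ accepts an input $w$ of length $n$ within $2^{p(n)}$ steps on a tape of length $2^{p(n)}$. Given $(M,w)$, the aim is to construct a polynomial-size recursion-free $\ClauseSet_{M,w}$ over QFPA such that $\ClauseSet_{M,w}$ is \emph{not} semantically solvable iff $M$ accepts $w$. The construction exploits two sources of succinctness. First, Presburger arithmetic represents time and tape indices $t, i \in \{0, \dots, 2^{p(n)}\}$ by $O(n)$-bit numbers, so local transition constraints relating the values at cells $i-1, i, i+1$ at times $t$ and $t+1$ fit inside constant-size clauses. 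Second, the exponential blow-up of $\mathit{exp}$ is harnessed via a \emph{doubling} hierarchy of relation symbols $R_0, R_1, \dots, R_{p(n)}$, each governed by a clause of shape $R_{k+1}(\bar x, \bar z) \leftarrow R_k(\bar x, \bar y) \wedge R_k(\bar y, \bar z)$; the expansion of $R_{p(n)}$ is then a balanced binary tree of depth $p(n)$ with $2^{p(n)}$ copies of $R_0$ at its leaves, each of which is made to encode a single TM transition. Initial-state clauses pin $w$ down at the left boundary, and a single clause with $\mathit{false}$ in the head rules out reaching an accepting state at the right boundary of $R_{p(n)}$.

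The main obstacle is that a Turing configuration has $2^{p(n)}$ cells, whereas each atom $R_k(\cdot)$ can only carry a polynomial (in fact constant) amount of information across a gluing point. I would resolve this in the standard ``windowed'' way: $R_k$ records not full configurations but absolute time stamps together with a constant number of $(\text{position}, \text{value})$ pairs, with QFPA atoms linking positions and times. Two fragments are composed by forcing their midpoint snapshots to agree on the cells they both mention, while consistency of the remaining cells is enforced by separately instantiated $R_k$-atoms appearing elsewhere in the same proof tree. Soundness then follows by induction on $k$: if every instantiation of $R_k$ across the proof tree agrees on shared cells, the full $2^{p(n)} \times 2^{p(n)}$ space-time grid admits a consistent assignment precisely when $M$ has an accepting run on $w$. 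The combinatorics here is the delicate step; once it is in place, all arithmetic remains within quantifier-free Presburger, the clause set is visibly recursion-free (only $R_{k+1}$ depends on $R_k$), and the reduction is polynomial, giving co-NEXPTIME-hardness and hence completeness.
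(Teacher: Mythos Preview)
Your upper bound argument and the doubling hierarchy~$R_{k+1}(\bar x,\bar z)\leftarrow R_k(\bar x,\bar y)\wedge R_k(\bar y,\bar z)$ are exactly what the paper uses. The divergence is in how a single Turing configuration is represented, and here your proposal contains a real gap.

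You identify as the ``main obstacle'' that an atom can carry only a constant amount of information across a gluing point, and you propose to compensate with a windowed encoding in which each $R_k$ stores a few $(\text{position},\text{value})$ pairs, relying on ``separately instantiated $R_k$-atoms appearing elsewhere in the same proof tree'' to enforce consistency of the remaining cells. This step does not go through as stated. In the expansion~$\mathit{exp}(\ClauseSet)$ every leaf occurrence of $R_0$ receives \emph{fresh} first-order variables; two leaves that happen to talk about the same tableau cell $(t,i)$ have no shared variable that would force them to agree on its value. The only information that can be equated is what is threaded through the shared midpoint arguments~$\bar y$ at each junction, and by your own assumption that is constant-size. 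Consequently a satisfying assignment of the expansion could assign inconsistent values to the same cell in different subtrees, witnessing a ``computation'' that is not a computation. Your inductive soundness claim therefore needs a mechanism you have not supplied.

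The paper sidesteps the whole issue by observing that the obstacle is illusory for Presburger arithmetic: integers are unbounded, so the \emph{entire} tape can be carried in two arguments. Concretely, the relation $\mathit{step}(q,l,r,q',l',r')$ stores the tape to the left of the head as the binary expansion of~$l$ and the tape from the head rightward as the binary expansion of~$r$ (least significant bit at the head). A transition $(q,b,q',b',L)$ then becomes the clause $\mathit{step}(q,\;x,\;b+2y,\;q',\;b'+2x,\;y)$, and similarly for right moves; reading the head cell, writing it, and shifting are all quantifier-free linear operations. With this representation each $R_k$ (the paper writes $\mathit{step}^k$) genuinely carries the full configuration, so the doubling chain of length~$n$ faithfully encodes $2^n$ steps, and a single error clause $\mathit{step}^n(0,0,t,f,x,y)\to\mathit{false}$ finishes the reduction. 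No windowing, no cross-tree consistency argument, and the whole construction is a handful of clauses.
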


The lower bounds in Lemma~\ref{lem:PSPACE-hard} and
Theorem~\ref{thm:co-NEXPTIME-complete} hinge on the fact that sets of
Horn clauses can contain shared relation symbols in bodies. Neither
result holds if we restrict attention to body-disjoint Horn clauses,
which correspond to disjunctive interpolation as introduced in
Sect.~\ref{sec:disjInts}.  Since the
expansion~$\mathit{exp}(\ClauseSet)$ of body-disjoint Horn clauses is
linear in the size of the set of Horn clauses, also solvability can be
checked efficiently:
\begin{theorem}
  Semantic solvability of a set of body-disjoint Horn clauses, and
  equivalently the existence of a solution for a disjunctive
  interpolation problem, is in co-NP when working over the
  \constraintLangs\ of Booleans and quantifier-free Presburger
  arithmetic.
\end{theorem}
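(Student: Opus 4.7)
The plan is to combine Theorem~\ref{thm:solvability}, which reduces semantic solvability of $\ClauseSet$ to unsatisfiability of the expansion $\mathit{exp}(\ClauseSet)$, with the observation made just above the theorem statement: for body-disjoint $\ClauseSet$ the expansion has only linear size. The problem then reduces to deciding unsatisfiability of a polynomial-size constraint, which lies in co-NP whenever satisfiability of the underlying constraint language is in NP.

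First I would formalise the size bound. Body-disjointness says that each relation symbol $p$ appears in at most one clause body, and at most once in that body. Tracing the recursive definition of $\mathit{exp}'$, this means the subformula $\mathit{exp}'(p(\bar t))$ is triggered by at most one outer body-occurrence of $p$, so each clause of $\ClauseSet$ is inlined at most once throughout the entire unfolding (even though a relation symbol may still have many defining clauses contributing to the disjunction inside its expansion). A straightforward induction on the depth of the dependence DAG $\to_{\ClauseSet}$ then yields $|\mathit{exp}(\ClauseSet)| = O(|\ClauseSet|)$, with only an additional polynomial factor coming from the fresh variable renamings used to produce the primed clause variants.

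Second, I would invoke the known NP upper bounds on satisfiability of the two target languages: for propositional logic this is immediate, and for quantifier-free Presburger arithmetic it follows from the classical small-model property. Since $\mathit{exp}(\ClauseSet)$ has polynomial size, a satisfying assignment (which by Theorem~\ref{thm:solvability} witnesses unsolvability of $\ClauseSet$) can be guessed and verified in nondeterministic polynomial time, so semantic solvability of body-disjoint $\ClauseSet$ is in co-NP. The equivalence with disjunctive interpolation follows from the translation of Sect.~\ref{sec:bodyDisjointClauses}, which is polynomial in both directions: a disjunctive interpolation problem can be encoded as a body-disjoint set of Horn clauses using one fresh relation symbol per marked position, and conversely a body-disjoint set yields the disjunctive interpolation instance $\mathit{enc}(\ClauseSet)$ of linear size. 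The co-NP bound therefore transfers directly to disjunctive interpolation.

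The main obstacle, such as it is, consists of carefully verifying the linear size bound for $\mathit{exp}(\ClauseSet)$ and confirming that the two translations between body-disjoint Horn clauses and disjunctive interpolation problems preserve polynomial size; the complexity-theoretic arguments themselves are standard once this bookkeeping is in place.
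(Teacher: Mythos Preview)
Your proposal is correct and matches the paper's approach: the paper's justification (given in the text immediately preceding the theorem rather than in a separate proof) is precisely that the expansion $\mathit{exp}(\ClauseSet)$ is linear in the size of $\ClauseSet$ for body-disjoint clauses, so by Theorem~\ref{thm:solvability} solvability reduces to unsatisfiability of a polynomial-size constraint over a language whose satisfiability problem is in NP. Your additional remarks on the size-preserving translations between body-disjoint Horn clauses and disjunctive interpolation problems correctly spell out the ``equivalently'' part of the claim.
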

Body-disjoint Horn clauses are still expressive: they 
can directly encode acyclic control-flow graphs, 
as well as acyclic unfolding of many simple recursion patterns.

For proofs of all results of this section, please consult the Appendix.


\section{Model Checking with Recursive Horn Clauses}

Whereas \emph{recursion-free} Horn clauses generalise the concept
of Craig interpolation, solving \emph{recursive} Horn clauses corresponds
to the verification of general programs with loops, recursion, or
concurrency
features~\cite{DBLP:conf/pldi/GrebenshchikovLPR12}. Procedures to
solve recursion-free Horn clauses can serve as a building block
within model checking algorithms for recursive Horn
clauses~\cite{DBLP:conf/pldi/GrebenshchikovLPR12}, and are used to
construct or refine abstractions by analysing spurious
counterexamples. In particular, our disjunctive interpolation can be used for this
purpose, and offers a high degree of flexibility due to the
possibility to analyse counterexamples combining multiple execution
traces.  We illustrate the use of disjunctive interpolation within a
predicate abstraction-based algorithms for solving Horn clauses. Our
model checking algorithm is similar in spirit to the procedure in
\cite{DBLP:conf/pldi/GrebenshchikovLPR12}, and we explain it in
Sect.~\ref{sec:bottumUp}.

\paragraph{And/or trees of clauses.}

For sake of presentation, in our algorithm we represent
counterexamples (i.e., recursion-free sets of Horn clauses) in the
form of and/or trees labelled with clauses. Such trees are defined by
the following grammar:
\begin{align*}
  \mathit{AOTree} &~~::=~~ \mathit{And}(h, \mathit{AOTree},
  \ldots, \mathit{AOTree}) ~\mid~ \mathit{Or}(\mathit{AOTree},
  \ldots, \mathit{AOTree})
\end{align*}
where $h$ ranges over (possibly recursive) Horn clauses. We only
consider well-formed trees, in which the children of every
$\mathit{And}$-node have head symbols that are consistent with the
body literals of the clause stored in the node, and the sub-trees of
an $\mathit{Or}$-node all have the same head symbol.  And/or trees are
turned into body-disjoint recursion-free sets of clauses by renaming
\relsyms\ appropriately.
\begin{example}
  Referring to the clauses in Fig.~\ref{fig:merge}, a possible and/or
  tree is
  \begin{equation*}
    \mathit{And}\big((5),\; \mathit{And}\big((3),\;
           \mathit{Or}(\mathit{And}((1)),\; \mathit{And}((2)))\big)\big)
  \end{equation*}
  A corresponding set of body-disjoint recursion-free clauses is:
\begin{lstlisting}
(1')  merge2(X,Y,Z) $\expl$ X = 0 $\pand$ Y >= 0 $\pand$ Z = Y
(2')  merge2(X,Y,Z) $\expl$ Y = 0 $\pand$ X >= 0 $\pand$ Z = X
(3')  merge1(X,Y,Z) $\expl$ Y1 = Y - 1 $\pand$ merge2(X, Y1, Z1) $\pand$ Z = Z1 + 1
(5')  false  $\expl$ merge1(X,Y,Z) $\pand$ Z > X + Y
\end{lstlisting}
\end{example}

\paragraph{Solving and/or dags.}
Counterexamples extracted from model checking problems often assume
the form of and/or \emph{dags}, rather than and/or \emph{trees}. Since
and/or-dags correspond to Horn clauses that are not body-disjoint, the
complexity-theoretic results of the last section imply that it is in
general impossible to avoid the expansion of and/or-dags to
and/or-trees; there are, however, various effective techniques to
speed-up handling of and/or-dags (somewhat related to the techniques
in \cite{DBLP:conf/cav/LalQL12}). 
We highlight two of the techniques we use in our interpolation
engine Princess~\cite{iprincess2011}, which we used in our experimental
evaluation of the next section:

\emph{1) counterexample-guided expansion} expands
and/or-dags lazily, until an unsatisfiable fragment of the fully
expanded tree has been found; such a fragment is sufficient to compute
a solution. Counterexamples can determine which or-branch of an
and/or-dag is still satisfiable and has to be expanded
further.

\emph{2) and/or dag restructuring} factors out common
sub-dags underneath an $\mathit{Or}$-node, making the
and/or-dag more tree-like.

\subsection{A Predicate Abstraction-based Model Checking Algorithm}
\label{sec:bottumUp}

Our model checking algorithm is in Fig.~\ref{fig:BottomUpARG},
and similar in spirit as the procedure in
\cite{DBLP:conf/pldi/GrebenshchikovLPR12}; it has been implemented in
the model checker
Eldarica.\footnote{\url{http://lara.epfl.ch/w/eldarica}} Solutions for
Horn clauses are constructed in disjunctive normal form by building an
abstract reachability graph over a set of given predicates. When a
counterexample is detected (a clause with consistent body literals and
head~$\mathit{false}$), a theorem prover is used to verify that the
counterexample is genuine; spurious counterexamples are eliminated by
generating additional predicates by means of disjunctive interpolation.

In Fig.~\ref{fig:BottomUpARG}, $\Pi: {\cal R} \rightarrow 2^P$ denotes
a mapping from \relsyms\ to the current set of predicates used to
approximate the \relsym.
Given a (possibly recursive) set~$\ClauseSet$ of Horn clauses, we define
an \emph{abstract reachability graph} (ARG) as a hyper-graph~$(S, E)$,
where
\begin{itemize}
\item $S \subseteq \{ (p, Q) \mid p \in {\cal R}, Q \subseteq \Pi(p)
  \}$ is the set of nodes, each of which is a pair consisting of a
  \relsym\ and a set of predicates.
\item $E \subseteq S^* \times \ClauseSet \times S$ is the
  hyper-edge relation, with each edge being labelled with a
  clause. An edge $E(\langle s_1, \ldots, s_n\rangle, h, s)$, with
  $h = (C \wedge B_1 \wedge \cdots \wedge
  B_n \to H) \in \ClauseSet$, implies that
  \begin{itemize}
  \item $s_i = (p_i, Q_i)$ and $B_i = p_i(\bar t_i)$ for all $i = 1,
    \ldots, n$, and
  \item $s = (p, Q)$, $H = p(\bar t)$, and $Q = \{ \phi \in \Pi(p)
    \mid C \wedge Q_1[\bar t_1] \wedge \cdots \wedge Q_n[\bar t_n]
    \models \phi[\bar t] \}$, where we write $Q_i[\bar t_i]$ for the
    conjunction of the predicates~$Q_i$ instantiated for the argument
    terms~$t_i$.
  \end{itemize}
\end{itemize}

An ARG~$(S, E)$ is called \emph{closed} if the edge relation
represents all Horn clauses in $\ClauseSet$. This means, for every
clause~$h = (C \wedge p_1(\bar t_1) \wedge \cdots \wedge p_n(\bar t_n) \to
H) \in \ClauseSet$ and every sequence~$(p_1, Q_1), \ldots, (p_n, Q_n) \in
S$ of nodes one of the following properties holds:
\begin{itemize}
\item $C \wedge Q_1[\bar t_1] \wedge \cdots \wedge Q_n[\bar t_n] \models
  \mathit{false}$, or
\item there is an edge~$E(\langle (p_1, Q_1), \ldots, (p_n,
  Q_n)\rangle, C, s)$ such that $s = (p, Q)$, $H = p(\bar t)$, and $Q
  = \{ \phi \in \Pi(p) \mid C \wedge Q_1[\bar t_1] \wedge \cdots \wedge
  Q_n[\bar t_n] \models \phi[\bar t] \}$.
\end{itemize}

\begin{lemma}
  \label{lem:CEGAR-completeness}
  A set~$\ClauseSet$ of Horn clauses has a closed ARG $(S, E)$ if and only
  if $\ClauseSet$ is syntactically solvable.
\end{lemma}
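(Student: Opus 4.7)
The plan is to prove the two directions by explicit construction. For the ``if'' direction (syntactic solvability implies closed ARG), suppose $\mathit{sol}$ is a syntactic solution of $\ClauseSet$. I would set $\Pi(p) = \{\mathit{sol}(p)[\bar x_p]\}$ for every $p \in \cal R$, so that each \relsym\ has exactly one candidate predicate, namely the solution itself. I then take $S = \{(p, \{\mathit{sol}(p)\}) \mid p \in \cal R\}$ together with the empty-predicate nodes, and put into $E$ exactly the edges forced by the definition of an ARG. To check closedness, I would take an arbitrary clause $h = (C \wedge p_1(\bar t_1) \wedge \cdots \wedge p_n(\bar t_n) \to H) \in \ClauseSet$ and a sequence of nodes $(p_i, Q_i) \in S$. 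Since $\mathit{sol}$ is a syntactic solution and $Q_i \subseteq \{\mathit{sol}(p_i)\}$, the body $C \wedge Q_1[\bar t_1] \wedge \cdots \wedge Q_n[\bar t_n]$ entails $\mathit{sol}(H)$ (or $\mathit{false}$ when $H = \mathit{false}$), which is precisely what closedness requires.

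For the ``only if'' direction (closed ARG implies syntactic solvability), given a closed ARG $(S, E)$ I would define
\begin{equation*}
  \mathit{sol}(p)[\bar x_p] ~=~ \bigvee_{(p, Q) \in S}\; \bigwedge_{\phi \in Q} \phi[\bar x_p]
\end{equation*}
(with the empty disjunction read as $\mathit{false}$). The task is to show that $\unicl(\mathit{sol}(h))$ is valid for every $h \in \ClauseSet$. Fix $h = (C \wedge p_1(\bar t_1) \wedge \cdots \wedge p_n(\bar t_n) \to H)$, and instantiate with arbitrary ground values; after instantiation, each $\mathit{sol}(p_i)[\bar t_i]$ is a disjunction, so I distribute over conjunction to obtain a disjunction of conjuncts indexed by tuples $((p_1, Q_1), \ldots, (p_n, Q_n)) \in S^n$. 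For each such tuple, closedness supplies two alternatives: either $C \wedge Q_1[\bar t_1] \wedge \cdots \wedge Q_n[\bar t_n] \models \mathit{false}$, making the disjunct trivially valid for any head, or there is an edge to some $(p, Q) \in S$ with $Q = \{\phi \in \Pi(p) \mid C \wedge Q_1[\bar t_1] \wedge \cdots \wedge Q_n[\bar t_n] \models \phi[\bar t]\}$; in the latter case $\bigwedge_{\phi \in Q} \phi[\bar t]$ is one of the disjuncts of $\mathit{sol}(p)[\bar t]$, so the implication holds. When $H = \mathit{false}$, only the first alternative is allowed by the definition of closedness, and the result again follows.

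The main obstacle is the bookkeeping in the backward direction: one has to verify that distributing $\wedge$ over the disjunctions coming from $\mathit{sol}(p_i)[\bar t_i]$ yields exactly the tuples of ARG nodes that closedness constrains, and that the set $Q$ chosen by the edge is rich enough that $\bigwedge Q[\bar t]$ is indeed implied by the body (not just that some $\phi \in Q$ is). This is immediate from the ``$Q = \{\phi \in \Pi(p) \mid \cdots\}$'' clause in the definition of an ARG edge, but it is the point where the precise formulation of closedness is used. Once this is observed, both directions amount to straightforward propositional reasoning combined with substitution of the solution formulas.
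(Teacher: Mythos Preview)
Your overall approach matches the paper's: for the forward direction you define $\mathit{sol}(p) = \bigvee_{(p,Q)\in S}\bigwedge Q$, and for the backward direction you set $\Pi(p) = \{\mathit{sol}(p)\}$ and build the ARG from the singleton-predicate nodes. The distributivity argument you spell out for the forward direction is correct and fleshes out what the paper merely asserts in one line.

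There is, however, a genuine error in your backward construction: you add the ``empty-predicate nodes'' $(p,\emptyset)$ to $S$, and then claim that for any $Q_i \subseteq \{\mathit{sol}(p_i)\}$ the body $C \wedge Q_1[\bar t_1] \wedge \cdots \wedge Q_n[\bar t_n]$ entails $\mathit{sol}(H)$. This fails when some $Q_i = \emptyset$, since then $\bigwedge Q_i[\bar t_i] = \mathit{true}$ and the body is strictly weaker than the instantiated clause body that $\mathit{sol}$ is guaranteed to validate. Concretely, take a clause $x>0 \wedge p(x) \to \mathit{false}$ with $\mathit{sol}(p)[x] = (x\le 0)$: the node $(p,\emptyset)$ gives body $x>0$, which does not entail $\mathit{false}$, so closedness is violated. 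The fix is simply to omit the empty-predicate nodes and take $S = \{(p,\{\mathit{sol}(p)\}) \mid p \in {\cal R}\}$, exactly as the paper does; your closedness check then goes through verbatim, since every $Q_i$ is forced to equal $\{\mathit{sol}(p_i)\}$.
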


A proof is given in Appendix~\ref{app:CEGAR-completeness}.  The
function~\textsc{ExtractCEX} (non-deterministically) extracts an
and/or-tree representing a set of counterexamples, which can be turned
into a recursion-free body-disjoint set of Horn clauses, and solved as
described in Sect.~\ref{sec:bodyDisjointClauses}. In general, the tree
contains both conjunctions (from clauses with multiple body literals)
and disjunctions, generated when following multiple hyper-edges (the
case $|T| > 1$). Disjunctions make it possible to eliminate multiple
counterexamples simultaneously. The algorithm is parametric in the
precise strategy used to compute counterexamples, choices evaluated in
the experiments are
\begin{description}
\item[TI] extraction of a single counterexamples with minimal depth\\
  (which means that disjunctive interpolation reduces to \textbf{T}ree
  \textbf{I}nterpolation), and
\item[DI] simultaneous extraction of all counterexamples with minimal
  depth\\ (so that genuine \textbf{D}isjunctive \textbf{I}nterpolation is used).
\end{description}

\begin{figure}[tb]
\begin{algorithmic}
\State $S:=\emptyset, ~ E:=\emptyset$ \Comment{Empty graph}
\Function{ConstructARG}{}
\While{$\mathit{true}$}
\State pick clause
   $h = (C \wedge p_1(\bar t_1) \wedge \cdots \wedge p_n(\bar t_n) \to
   H) \in \ClauseSet$
   \State\qquad and nodes $(p_1, Q_1), \ldots, (p_n, Q_n) \in S$
  \State\qquad such that
     $\neg\exists s.\; (\langle(p_1, Q_1), \ldots, (p_n, Q_n)\rangle, h, s) \in E$
  \State\qquad and $C \wedge Q_1[\bar t_1] \wedge \cdots \wedge Q_n[\bar t_n] \not\models \mathit{false}$

\smallskip
\If{no such clauses and nodes exist}\ \Return{$\ClauseSet$ is solvable}

\EndIf
\smallskip
\If{$H = \mathit{false}$}
  \Comment{Refinement needed}
  \State
    $\mathit{tree} := \mathit{And}(h, \textsc{ExtractCEX}(p_1, Q_1),
   \ldots, \textsc{ExtractCEX}(p_n, Q_n)$
   \If{$\mathit{tree}$ is unsatisfiable}
     \State extract disjunctive interpolant from $\mathit{tree}$, add predicates to $\Pi$
     \State delete part of $(S, E)$ used to construct $\mathit{tree}$
   \Else\ \Return{$\ClauseSet$ is unsolvable, with counterexample trace $\mathit{tree}$}
   \EndIf
\Else
\Comment{Add edge to ARG}
\State then $H = p(\bar t)$
\State $Q := \{ \phi \in \Pi(p) \mid \{C\} \cup Q_1 \cup \ldots \cup Q_n
  \models \phi \}$
\State $e := (\langle(p_1, Q_1), \ldots, (p_n, Q_n)\rangle, h, (p, Q))$
  \State $S := S \cup \{ (p, Q) \}, ~ E := E \cup \{ e \}$
\EndIf
\EndWhile
\EndFunction

\medskip
\Function{ExtractCEX}{$\mathit{root} : S$}
\Comment{Extract disjunctive interpolation problem}
\State pick $\emptyset \not= T \subseteq E$
 with $\forall e \in T.~ e = (\_, \_, \mathit{root})$
\State
 \Return{$\mathit{Or} \big\{\, \mathit{And}(h, \textsc{ExtractCEX}(s_1),
   \ldots, \textsc{ExtractCEX}(s_n)) \mid
     (\langle s_1, \ldots, s_n\rangle, h, \mathit{root}) \in T \,\big\}$}
\EndFunction
\end{algorithmic}

\vspace*{-2ex}
\caption{Algorithm for construction of abstract reachability graphs.
  \label{fig::bottomUpARG}}
\label{fig:BottomUpARG}
\end{figure}


We remark that we have also implemented a simpler ``global'' algorithm
(see Sect.~\ref{sec:example}), which
approximates each relation symbol globally with a single conjunction
of inferred predicates. In contrast, the above algorithm allows
multiple nodes, each of which contains a different conjunction, thus
allowing a disjunction of conjunctions of predicates. Both algorithms
behave similarly in our experience, with the global one occasionally slower,
but conceptually simpler.  Note that, what allowed us to use a simpler
algorithm at all is the fact that the interpolation problem considered
is more general. Thus, another advantage of more expressive
forms of interpolation is the simplicity of the resulting verification
algorithms built on top of them.


\subsection{Experimental Evaluation}\label{sec:experiments}

\begin{figure}[p]
\footnotesize
~~
\begin{tabular}[t]{lrrr}
\hline
\multirow{2}{*}{Model} & \multicolumn{3}{c}{ Time [s]} \\
& \textbf{TI} & ~~~~\textbf{DI} & ~~~~\textbf{HSF} \\
\hline
\multicolumn{4}{l}{\textbf{(a) Recursive Models }} \\
\hline
addition (C)      & 0.46	&	0.47& 0.19 	\\
binarysearch (C)  & 0.52	&	0.53& 0.16 	\\
mccarthy-91 (C)   & 1.19	&	1.01& 0.17 	\\
mccarthy-92 (C)~~~~~   & 10.43	&	5.90& ERR  	\\
palindrome  (C)   & 0.92	&	1.66& 0.16 	\\
remainder  (C)    & 0.77	&	0.76& ERR  	\\
identity  (C)     & 0.73	&	0.93& 0.16 	\\
merge (C)         & 0.76	&	1.50& 0.16 	\\
parity (C)        & 0.80	&	0.80& ERR  	\\
running  (C)      & 0.60	&	0.60& 0.15 	\\
triple (C)        & 1.92	&	1.96& ERR  	\\
\hline
\multicolumn{4}{l}{\textbf{(b) Examples from L2CA \cite{cav06}}} \\
\hline
bubblesort (E)    & 2.32	&	2.52& 2.51	\\
insdel (E)        & 0.38	&	0.35& 0.18	\\
insertsort (E)    & 1.31	&	1.47& 0.46	\\
listcounter (E)   & 0.44	&	0.48& 0.19	\\
listreversal (C)  & 1.28	&	1.34& 0.32	\\
listreversal (E)  & 1.55	&	1.54& 1.54	\\
mergesort (E)     & 0.64	&	0.78& 0.29	\\
selectionsort (E) & 1.72	&	1.46& 1.24	\\
\hline
\multicolumn{4}{l}{\textbf{(c) VHDL models from \cite{smrcka-vojnar08}}} \\
\hline
counter (C)       & 0.87	&	0.94&0.16	\\
register (C)      & 0.80	&	0.86&0.17	\\
\hline
\multicolumn{4}{l}{\textbf{(d) Verification conditions for}} \\
\multicolumn{4}{l}{\textbf{array programs \cite{cav09}}} \\
\hline
rotation\_vc.1 (C) & 2.13	&	2.58& 0.32	\\
rotation\_vc.2 (C) & 3.21	&	3.74& 0.34	\\
rotation\_vc.3 (C) & 3.36	&	3.29& 0.31	\\
rotation\_vc.1 (E) & 1.94	&	1.97& 0.30	\\
split\_vc.1 (C)    & 5.23	&	5.83& 0.97	\\
split\_vc.2 (C)    & -   	&	4.45& 0.69	\\
split\_vc.3 (C)    & -  	&	4.93& 0.65	\\
split\_vc.1 (E)    & 4.46	&	3.42& 1.11	\\
\hline
\multicolumn{4}{l}{\textbf{(e) NECLA benchmarks}} \\
\hline
inf1 (E) & 0.89	&	0.87& 0.24	\\
inf4 (E) & 1.07	&	1.53& 0.38	\\
inf6 (C) & 1.30	&	1.34& 0.19	\\
inf8 (C) & 1.73	&	1.89& 0.22	\\
\hline
\end{tabular}
\hspace*{\fill}%
\begin{tabular}[t]{lrrr}
\hline
\multirow{2}{*}{Model} & \multicolumn{3}{c}{ Time [s]} \\
& \textbf{TI} & ~~~~\textbf{DI} & ~~~~~~\textbf{HSF} \\
\hline
\multicolumn{4}{l}{\textbf{(f) Examples from \cite{ganty-rupak}}} \\
\hline
h1 (E)            & 8.32	&15.90  	  & 0.65	\\
h1.opt (E)        & 1.19	&1.00   	  & 0.26   \\
h1h2 (E)          & 16.49	&30.98  	  & 1.10	\\
h1h2.opt (E)      & 3.63	&2.00   	  & 0.33   \\
simple (E)        & 10.94	&16.80  	  & 1.07	\\
simple.opt (E)    & 1.23	&1.02   	  & 0.25   \\
test0 (C)         & 21.78	&110.11 	  & 1.57   \\
test0.opt (C)     & 2.98	&3.00   	  & 0.29   \\
test0 (E)          & 9.35	&35.42  	  & 0.62   \\
test0.opt (E)      & 1.14	&0.99   	  & 0.25   \\
test1.opt (C)      & 4.99	&10.84  		& 0.66   \\
test1 (E)          & 117.41	&364.48 	   & 102.88 \\
test1.opt (E)      & 4.33	&4.89   	     & 0.54   \\
test2\_1 (E)       & 55.67	&145.07 		& 189.31 \\
test2\_1.opt (E)   & 3.36	&3.40   		& 0.41   \\
test2\_2 (E)       & 145.79	&127.21 		& 132.55 \\
test2\_2.opt (E)   & 4.54	&4.52   		& 0.36   \\
test2.opt (C)      & 46.41	&163.41 		& 2.65   \\
wrpc.manual (C)~~~    & 0.55	&0.68   		& 0.17   \\
wrpc (E)           & 21.00	&26.73  		& 2.99   \\
wrpc.opt (E)       & 2.40	&2.35   		& 0.52   \\
\hline
\end{tabular}
~~

\hspace*{\fill}%
\raisebox{0ex}[0ex][0ex]{%
  \includegraphics[trim=7 2 5 -5,width=0.54\linewidth]{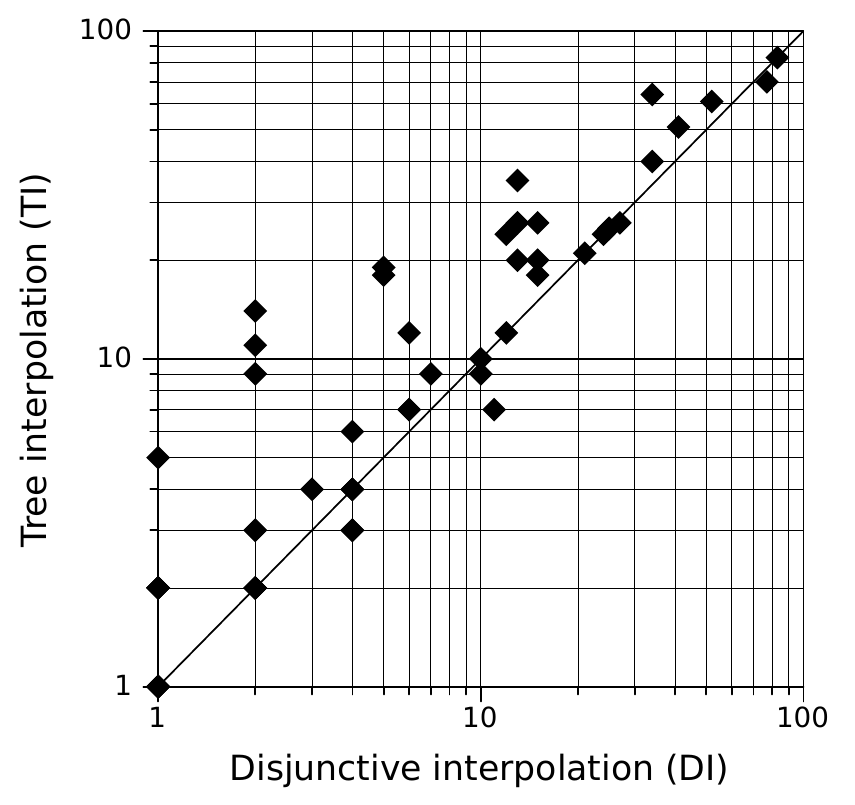}}%

\caption{Benchmarks for model checking Horn clauses. The letter
  after the model name distinguishes \textbf{C}orrect from models with
  a reachable \textbf{E}rror state. ``-'' indicates timeout.  The
  scatter plot illustrates the required number of refinement steps, for the
  case of single counterexamples (\textbf{TI}) and
  simultaneous extraction of all minimal-depth counterexamples
  (\textbf{DI}). All experiments were done on an
  Intel Core~i5 2-core machine with 3.2GHz and 8Gb, with a timeout of 500s.}
  \label{fig:data}
\end{figure}

We have evaluated our algorithm on a set of
benchmarks\footnote{\url{https://svn.sosy-lab.org/software/sv-benchmarks/trunk/clauses/LIA/}}
in integer linear arithmetic from the NTS library~\cite{fmtool2012}.
The (a) benchmarks are recursive algorithms, (b) benchmarks are
extracted from programs with singly-linked lists, (c) benchmarks are
models extracted from VHDL models of circuits, (d) benchmarks are
verification conditions for programs with arrays, (e) benchmarks are
in the NECLA static analysis suite, (f) C programs with asynchronous
procedure calls translated into NTS using the approach of
\cite{ganty-rupak} (the examples with extension .opt are obtained
via an optimised translation method [Pierre Ganty, personal
communication]. The results are given in Fig.~\ref{fig:data}.

The experiments show comparable verification times and
performance for the \textbf{T}ree \textbf{I}nterpolation and
\textbf{D}isjunctive \textbf{I}nterpolation runs. Studying the results
more closely, we observed that \textbf{DI} consistently led to a
smaller number of abstraction refinement steps (the scatter plot in
Fig.~\ref{fig:data}); this indicates that \textbf{DI} is indeed able
to eliminate multiple counterexamples simultaneously, and to rapidly
generate predicates that are useful for abstraction.  The experiments
also showed that there is a trade-off between the time spent
generating predicates, and the quality of the predicates. In
\textbf{TI}, on average $31\%$ of the verification is used for
predicate generation (interpolation), while with \textbf{DI} $42\%$ is
used; in some of the benchmarks in (f), this led to the phenomenon
that \textbf{DI} was slower than \textbf{TI}, despite fewer refinement
steps. We expect this will change as we make further improvements to
our prototypical implementation of disjunctive interpolation.

We compared our results to the performance of 
HSF,\footnote{\url{http://www7.in.tum.de/tools/hsf/}} a sophisticated
state-of-the-art verification engine for problems expressed as 
Horn clauses. We observe
similar performance on many benchmarks, with HSF notably faster on many
(f) benchmarks but the difference less pronounced for large benchmarks. 
We were
unable to process with HSF the benchmarks in (a) containing modular arithmetic; we marked those with ERR.


\subsection*{Conclusions}

We have introduced disjunctive interpolation as a new form of Craig
interpolation tailored to model checkers based on the paradigm of Horn
clauses. Disjunctive interpolation can be identified as solving
body-disjoint systems of recursion-free Horn clauses, and subsumes a
number of previous forms of interpolation, including tree
interpolation and inductive sequences of interpolants. We believe that
the flexibility of disjunctive interpolation is highly beneficial for
building interpolation-based model checkers. In particular, when
implementing more intelligent techniques (than used in our
experiments) to select sets of counterexamples handed over to
interpolation, significant speed-ups can be expected. We plan to
explore this direction in future work, together with 
improvements in the implementation of disjunctive interpolation
itself.


\clearpage

\begin{thebibliography}{10}

\bibitem{DBLP:conf/sas/AlbarghouthiGC12}
A.~Albarghouthi, A.~Gurfinkel, and M.~Chechik.
\newblock Craig interpretation.
\newblock In {\em SAS}, 2012.

\bibitem{DBLP:conf/vmcai/AlbarghouthiGC12}
A.~Albarghouthi, A.~Gurfinkel, and M.~Chechik.
\newblock Whale: An interpolation-based algorithm for inter-procedural
  verification.
\newblock In {\em VMCAI}, pages 39--55, 2012.

\bibitem{BallETAL02RelativeCompletenessAbstractionRefinement}
T.~Ball, A.~Podelski, and S.~K. Rajamani.
\newblock Relative completeness of abstraction refinement for software model
  checking.
\newblock In {\em TACAS'02}, volume 2280 of {\em LNCS}, page 158, 2002.

\bibitem{verificationAsSMT}
N.~Bj{\o}rner, K.~McMillan, and A.~Rybalchenko.
\newblock Program verification as satisfiability modulo theories.
\newblock In {\em SMT Workshop at IJCAR}, 2012.

\bibitem{bonacina12}
M.~P. Bonacina and M.~Johansson.
\newblock On interpolation in automated theorem proving.
\newblock {\em (submitted)}, 2012.

\bibitem{cav06}
A.~Bouajjani, M.~Bozga, P.~Habermehl, R.~Iosif, P.~Moro, and T.~Vojnar.
\newblock Programs with lists are counter automata.
\newblock In {\em CAV}, pages 517--531, 2006.

\bibitem{cav09}
M.~Bozga, P.~Habermehl, R.~Iosif, F.~Kone\v{c}n\'{y}, and T.~Vojnar.
\newblock Automatic verification of integer array programs.
\newblock In {\em CAV}, pages 157--172, 2009.

\bibitem{iprincess2011}
A.~Brillout, D.~Kroening, P.~R{\"u}mmer, and T.~Wahl.
\newblock An interpolating sequent calculus for quantifier-free {Presburger}
  arithmetic.
\newblock {\em Journal of Automated Reasoning}, 47:341--367, 2011.

\bibitem{DBLP:journals/tocl/CimattiGS10}
A.~Cimatti, A.~Griggio, and R.~Sebastiani.
\newblock Efficient generation of {Craig} interpolants in satisfiability modulo
  theories.
\newblock {\em ACM Trans. Comput. Log.}, 12(1):7, 2010.

\bibitem{craig1957linear}
W.~Craig.
\newblock Linear reasoning. {A} new form of the {Herbrand-Gentzen} theorem.
\newblock {\em The Journal of Symbolic Logic}, 22(3):250--268, September 1957.

\bibitem{Moura:2008}
L.~de~Moura and N.~Bj{\o}rner.
\newblock {Z3}: An efficient {SMT} solver.
\newblock In {\em TACAS}, pages 337--340. Springer-Verlag, 2008.

\bibitem{ganty-rupak}
P.~Ganty and R.~Majumdar.
\newblock Algorithmic verification of asynchronous programs.
\newblock {\em CoRR}, abs/1011.0551, 2010.

\bibitem{GrafSaidi97ConstructionAbstractStateGraphsPVS}
S.~Graf and H.~Saidi.
\newblock Construction of abstract state graphs with {PVS}.
\newblock In {\em CAV}, pages 72--83, 1997.

\bibitem{DBLP:conf/pldi/GrebenshchikovLPR12}
S.~Grebenshchikov, N.~P. Lopes, C.~Popeea, and A.~Rybalchenko.
\newblock Synthesizing software verifiers from proof rules.
\newblock In {\em PLDI}, 2012.

\bibitem{DBLP:conf/popl/GuptaPR11}
A.~Gupta, C.~Popeea, and A.~Rybalchenko.
\newblock Predicate abstraction and refinement for verifying multi-threaded
  programs.
\newblock In {\em POPL}, 2011.

\bibitem{DBLP:conf/aplas/GuptaPR11}
A.~Gupta, C.~Popeea, and A.~Rybalchenko.
\newblock Solving recursion-free horn clauses over {LI+UIF}.
\newblock In {\em APLAS}, pages 188--203, 2011.

\bibitem{DBLP:conf/popl/HeizmannHP10}
M.~Heizmann, J.~Hoenicke, and A.~Podelski.
\newblock Nested interpolants.
\newblock In {\em POPL}, 2010.

\bibitem{Henzinger:2004}
T.~A. Henzinger, R.~Jhala, R.~Majumdar, and K.~L. McMillan.
\newblock Abstractions from proofs.
\newblock In {\em POPL}, pages 232--244. ACM, 2004.

\bibitem{HojjatETAL12AcceleratingInterpolants}
H.~Hojjat, R.~Iosif, F.~Kone\v{c}n\'{y}, V.~Kuncak, and P.~R\"ummer.
\newblock Accelerating interpolants.
\newblock In {\em Automated Technology for Verification and Analysis (ATVA)},
  2012.

\bibitem{fmtool2012}
H.~Hojjat, F.~Kone\v{c}n\'{y}, F.~Garnier, R.~Iosif, V.~Kuncak, and
  P.~R\"ummer.
\newblock A verification toolkit for numerical transition systems (tool paper).
\newblock In {\em FM}, 2012.

\bibitem{DBLP:conf/cav/LalQL12}
A.~Lal, S.~Qadeer, and S.~K. Lahiri.
\newblock Corral: A solver for reachability modulo theories.
\newblock In {\em CAV}, 2012.

\bibitem{tree-interpolants}
K.~L. McMillan.
\newblock {iZ3} documentation.
\newblock
  \\\url{http://research.microsoft.com/en-us/um/redmond/projects/z3/iz3documen%
tation.html}.

\bibitem{DBLP:conf/cav/McMillan03}
K.~L. McMillan.
\newblock Interpolation and {SAT}-based model checking.
\newblock In {\em CAV}, 2003.

\bibitem{DBLP:conf/cav/McMillan06}
K.~L. McMillan.
\newblock Lazy abstraction with interpolants.
\newblock In {\em CAV}, 2006.

\bibitem{smrcka-vojnar08}
A.~Smrcka and T.~Vojnar.
\newblock Verifying parametrised hardware designs via counter automata.
\newblock In {\em Haifa Verification Conference}, pages 51--68, 2007.

\end{thebibliography}

\appendix

\section{Solving Recursion-free Horn Clauses: Proof of
  Theorem~\ref{thm:solvability}}
\label{app:rec-free-horns}

We outline a proof for Theorem~\ref{thm:solvability}, direction $3
\Rightarrow 2$. Suppose the expansion $\mathit{exp}({\ClauseSet})$ of a
set~$\ClauseSet$ of recursion-free Horn clauses is unsatisfiable. As
before, we compute a solution of the Horn clauses separately for every
connected component of the $\to_{\ClauseSet}$-graph. Wlog we can therefore
assume that the $\to_{\ClauseSet}$-graph is connected.

\paragraph{Elimination of duplicated relation symbols.}

Furthermore, we can assume that every relation variable occurs at most
once in the body of a clause. Otherwise we duplicate the relation
variable (and all clauses defining it), and solve the resulting
simpler system. E.g., if we have clauses

\begin{equation*}
  p(x, y) \wedge p(y, z) \to r(x, z),\quad
  q(x, y) \to p(x, y),\quad
  x \geq 0 \to q(x, x)
\end{equation*}
we first expand the system to
\begin{multline*}
  p_1(x, y) \wedge p_2(y, z) \to r(x, z),\\
  q_1(x, y) \to p_1(x, y),\quad
  x \geq 0 \to q_1(x, x),\quad
  q_2(x, y) \to p_2(x, y),\quad
  x \geq 0 \to q_2(x, x)
\end{multline*}
and solve the expanded system. Afterwards we construct a solution of the
original system as
\begin{equation*}
  C_p[x,y] ~=~
  C_{p_1}[x,y] \wedge C_{p_2}[x,y],\quad
  C_q[x,y] ~=~
  C_{q_1}[x,y] \wedge C_{q_2}[x,y]
\end{equation*}
This is possible because the space of (syntactic) solutions of a Horn
clause is closed under conjunction.

\paragraph{Renaming of first-order variables and normalisation.}

We normalise the resulting clauses like in
Sect.~\ref{sec:bodyDisjointClauses}: for every \relsym\ $p$, we fix a unique
vector of variables~$\bar x_p$, and rewrite $\ClauseSet$ such that $p$
only occurs in the form~$p(\bar x_p)$; by renaming variables, we then
ensure that every variable~$x$ that is not argument of a \relsym\
occurs in at most one clause.

\paragraph{Encoding into a disjunctive interpolation problem.}

The translation from Horn clauses to a disjunctive interpolation
problem is done by adapting the expansion function~$\mathit{exp}$ from
Sect.~\ref{sec:rf-horn}:
\begin{align*}
  \mathit{enc}\big({\ClauseSet}\big) &~=~
  \bigvee_{( C \wedge B_1 \wedge \cdots \wedge B_n \to \mathit{false}) \,\in {\ClauseSet}}
  C' \wedge \mathit{enc}(B_1) \wedge \cdots \wedge \mathit{enc}(B_n)
  \\
  \mathit{enc}\big(p(\bar x_p)\big) &~=~
  \left(
  \bigvee_{( C \wedge B_1 \wedge \cdots \wedge B_n \to p(\bar x_p)) \,\in {\ClauseSet}}
  C' \wedge \mathit{enc}(B_1) \wedge \cdots \wedge \mathit{enc}(B_n)
  \right)_{l_p}
\end{align*}
Note that the resulting formula~$\mathit{enc}(\ClauseSet)$ contains a
unique position~$l_p$ at which the definition of a \relsym~$p$ is
inlined; in the second equation, this position is marked with
$l_p$. We then derive a disjunctive interpolant~$I$ for this set of
positions in $\mathit{enc}(\ClauseSet)$. A syntactic solution of $\ClauseSet$
is then given by the definition~$\forall \bar x_p. \big( p(\bar x_p)
\leftrightarrow I(l_p)\big)$, for all \relsyms~$p$.


\section{Solvability of Recursion-free Horn Clauses is PSPACE-hard:
  Proof of Lemma~\ref{lem:PSPACE-hard}}
\label{app:PSPACE-hard}

We reduce the unsatisfiability problem of quantified boolean formulae
(known to be PSPACE-hard) to solvability of recursion-free Horn
clauses.  Assume an arbitrary QBF of the shape~$\phi = Q_1 x_1. Q_2
x_2. ... Q_n x_n. F$, where $Q_i \in {\exists, \forall}$ are
quantifiers, $x_i$ are all variables occurring in the formula, and $F$
is a quantifier-free Boolean formula in CNF.

We translate $\phi$ into a recursion-free set of Horn clauses:
\begin{itemize}
\item a literal $x_i$ of a clause $C_j$ in $F$ becomes a Horn clause\\
  $x_i=t_1 \to C_{i,j}(x_1, x_2, \ldots, x_{i-1}, t_1, x_{i+1}, \ldots,
  x_n)$
\item a literal $\neg x_i$ of a clause $C_j$ in $F$ becomes a Horn
  clause\\
  $x_i=t_0 \to C_{i,j}(x_1, x_2, \ldots, x_{i-1}, t_0, x_{i+1},
  \ldots, x_n)$
\item a clause $C_j$ in $F$ becomes a set of Horn clauses\\
  $C_{1,j}(x_1, \ldots) \to C_j(x_1, \ldots), \quad
  C_{2,j}(x_1, \ldots) \to C_j(x_1, \ldots), \quad \ldots$
\item the body $F$ becomes the Horn clause\\
  $C_1(x_1, \ldots) \wedge
  C_2(x_1, \ldots) \wedge \cdots \to F_n(x_1, \ldots)$
\item a quantifier $Q_i = \exists$ is translated as the two clauses\\
  $F_{i+1}(x_1, \ldots, x_{i-1}, 0) \to F_i(x_1, \ldots, x_{i-1}),\quad
  F_{i+1}(x_1, \ldots, x_{i-1}, 1) \to F_i(x_1, \ldots, x_{i-1})$
\item a quantifier $Q_i = \forall$ is translated as the clause\\
  $F_{i+1}(x_1, \ldots, x_{i-1}, 0) \wedge F_{i+1}(x_1, \ldots,
  x_{i-1}, 1) \to F_i(x_1, \ldots, x_{i-1})$
\item finally, we add the clause $F_1() \wedge t_0 \not= t_1 \to
  \mathit{false}$.
\end{itemize}
It is now easy to see that the expansion~$\mathit{exp}(\ClauseSet)$ of the
Horn clauses coincides with the result of expanding all quantifiers in
$\phi$.
By Theorem~\ref{thm:solvability}, unsatisfiability of the expansion is
equivalent to solvability of the set of Horn clauses.

\section{Succinct Expansion of Recursion-free Horn Clauses}
\label{app:s-exp}

The following lemma implies that solvability of recursion-free Horn
clauses over the theory of Booleans is PSPACE-complete:
\begin{lemma}[Succinct expansion]
  Let $\ClauseSet$ be a finite, recursion-free set of Horn clauses. If the
  underlying \constraintLang\ provides quantifiers, in (deterministic)
  linear time a formula~$\mathit{sexp}({\ClauseSet})$ can be extracted
  that is equivalent to $\mathit{exp}({\ClauseSet})$. The number of
  quantifier alternations in $\mathit{sexp}({\ClauseSet})$ is at most two
  times the number of \relsyms\ in $\ClauseSet$.
\end{lemma}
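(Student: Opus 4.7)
The plan is to build $\mathit{sexp}(\ClauseSet)$ by bottom-up induction on the dependence DAG $\to_{\ClauseSet}$. Since $\ClauseSet$ is recursion-free, I enumerate the \relsyms\ as $p_1, \ldots, p_k$ in reverse topological order, so that whenever a clause defining $p_i$ calls $p_j$ in its body, one has $i < j$. For each $p_i$ I construct a \constraint\ $\Phi_{p_i}(\bar x_{p_i})$ using the canonical argument tuple of Sect.~\ref{sec:bodyDisjointClauses}, equivalent to $\mathit{exp}'(p_i(\bar x_{p_i}))$. The base case (leaves, whose defining clauses have no \relsym\ in the body) is just the disjunction over defining clauses of the existentially-quantified body constraints, which is linear in those clauses.

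The crucial trick in the inductive step is to factor via universal quantifiers to avoid duplication. Naive inlining of a clause $C \wedge q_1(\bar t_1) \wedge \cdots \wedge q_n(\bar t_n) \to p(\bar s)$ would produce $\exists \bar y.\,C \wedge \bar x_p = \bar s \wedge \Phi_{q_1}(\bar t_1) \wedge \cdots \wedge \Phi_{q_n}(\bar t_n)$, which reproduces the exponential blow-up of $\mathit{exp}$ whenever a single $q_i$ occurs repeatedly. Instead, for each \relsym\ $q$ appearing anywhere in the bodies of $p$'s clauses, I list its call arguments $\bar t^q_1, \ldots, \bar t^q_{m_q}$ and invoke the first-order equivalence
\begin{equation*}
\Phi_q(\bar t^q_1) \wedge \cdots \wedge \Phi_q(\bar t^q_{m_q}) \;\iff\; \forall \bar v_q.\; \Bigl(\bigvee_i \bar v_q = \bar t^q_i\Bigr) \to \Phi_q(\bar v_q),
\end{equation*}
so that $\Phi_q$ occurs syntactically only once inside $\Phi_p$. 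To combine multiple defining clauses of $p$, I introduce an index variable $c$ gating the choice, so that $\Phi_p$ takes the shape $\exists c, \bar y.\;(\text{clause-choice disjunction}) \wedge \bigwedge_{q} \forall \bar v_q.\,\Psi_{p,q}(c, \bar v_q) \to \Phi_q(\bar v_q)$, where the quantifier-free $\Psi_{p,q}$ selects exactly the call sites of $q$ in the chosen clause.

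Finally, $\mathit{sexp}(\ClauseSet)$ is obtained by applying the same construction to the error clauses (those with head $\mathit{false}$), producing an outermost $\exists$-$\forall$ shell around the $\Phi_{p_i}$'s. The local contribution of each $p_i$ is proportional to the size of its defining clauses plus a constant per distinct \relsym\ it calls, so the overall formula has size $O(|\ClauseSet|)$ and can be emitted in deterministic linear time; each $p_i$ adds one $\exists$ block and one $\forall$ block to the prefix, bounding the alternation depth by $2k$. Equivalence to $\mathit{exp}(\ClauseSet)$ follows by a routine structural induction from the factoring identity above. The main obstacle I anticipate is spelling out $\Psi_{p,q}$ carefully enough that a single $\forall \bar v_q$ block fires precisely on the call sites of the selected clause, and not spuriously on those of the competing clauses, while remaining of size linear in the clauses of $p$; once this bookkeeping is pinned down, the size bound and the alternation count drop out mechanically.
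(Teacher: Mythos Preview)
Your factoring identity is the right lever, and it is essentially what the paper exploits too. But your \emph{bottom-up, nested} construction has a real gap in the size argument. The equivalence
\[
\textstyle\bigwedge_i \Phi_q(\bar t^{\,q}_i) \;\iff\; \forall \bar v_q.\;\Bigl(\bigvee_i \bar v_q = \bar t^{\,q}_i\Bigr) \to \Phi_q(\bar v_q)
\]
collapses multiple calls to $q$ \emph{from the clauses defining one fixed $p$} into a single syntactic occurrence of $\Phi_q$ inside $\Phi_p$. It does nothing for calls to $q$ issued from \emph{different} relation symbols. If both $p_1$ and $p_2$ call $q$, then $\Phi_{p_1}$ and $\Phi_{p_2}$ each contain a copy of $\Phi_q$, and any $\Phi$ built above them (say for an error clause whose body mentions both $p_1$ and $p_2$) contains $\Phi_q$ twice; you cannot merge them with your trick because $p_1$ and $p_2$ are distinct symbols. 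Chaining $n$ such diamonds yields $2^n$ copies of the bottom $\Phi$ in $\mathit{sexp}(\ClauseSet)$, so the claimed $O(|\ClauseSet|)$ bound fails exactly on the DAG-shaped dependence graphs that make $\mathit{exp}(\ClauseSet)$ blow up in the first place. Your sentence ``the local contribution of each $p_i$ is proportional to \ldots'' tacitly assumes each $\Phi_{p_i}$ is emitted once globally, which the nested definition does not guarantee.

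The paper's construction sidesteps this by building a \emph{flat} conjunction rather than a nested formula, proceeding \emph{top-down} from the error clauses. Whenever a body literal $p_j(\bar t_j)$ is encountered, it is replaced by a fresh existentially-quantified Boolean flag~$f_j$, and the pair $(f_j, p_j(\bar t_j))$ is recorded. Then, for each relation symbol~$p$ (in topological order), a single conjunct is appended to the matrix, preceded by one $\forall \bar x$ block for $p$'s arguments:
\[
\Bigl(\bigvee_{(f_j,\, p(\bar s_j))\ \text{recorded}} f_j \wedge \bar x = \bar s_j\Bigr)\ \to\ \bigl(\text{disjunction over bodies of $p$'s defining clauses, with flags in place of calls}\bigr).
\]
Each relation symbol thus contributes exactly one conjunct and each clause body is written out exactly once, giving linear size regardless of sharing in the dependence DAG; the $\exists$-block for the flags and clause variables plus the $\forall$-block for $\bar x$ give the two alternations per symbol. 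Your approach can be repaired along the same lines: do not nest $\Phi_q$ inside $\Phi_p$, but keep all the $\Phi$'s as sibling conjuncts and use existentially-held flags (or, equivalently, existentially-held copies of the call-argument tuples) to wire call sites to definitions across conjuncts.
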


\begin{proof}
  We assume that the Horn clauses are connected, i.e., the $\to_{\ClauseSet}$-graph 
  consists of a single connected component. Further, we
  assume that the first-order variables in any two clauses in $\ClauseSet$
  are disjoint. The encoding of Horn clause as a QBF formula is then
  defined by the following algorithm in pseudo-code. The algorithm
  maintains a list $\mathit{quantifiers}$ of quantifiers that have to
  be added in front of the formula.

\medskip
  \begin{algorithmic}
    \State $\mathit{quantifiers} \leftarrow \epsilon,\; \mathit{checksRequired}
    \leftarrow \emptyset$

\medskip
    \Function{Encode}{$\ClauseSet$}
    \State Order clauses~$\ClauseSet$ in topological order, starting
    from clauses with head~$\mathit{false}$
    \State $\mathit{matrix} \leftarrow
        \textsc{EncodeBodies}(\{C \wedge p_1(\bar t_1) \wedge \cdots \wedge p_n(\bar t_n) \to \mathit{false} \in \ClauseSet\}, \epsilon)$
    \State $\mathit{remaining} \leftarrow \{
                C \wedge p_1(\bar t_1) \wedge \cdots \wedge p_n(\bar t_n) \to p(\bar t) \in \ClauseSet\}$

    \While{$\mathit{remaining} \not= \emptyset$}
    \State Pick first clause
    $C \wedge p_1(\bar t_1) \wedge \cdots \wedge p_n(\bar t_n) \to p(\bar t) \in \ClauseSet$ in topological order
    \State $\mathit{nextClauses} \leftarrow \{c \in \ClauseSet \mid \text{head symbol
         of $c$ is $p$}\}$
    \State $\mathit{remaining} \leftarrow \mathit{remaining} \setminus \mathit{nextClauses}$
    \For{$i \leftarrow 1, \ldots, \mathit{arity}(p)$}
      \State Create fresh variable $x_i$
      \State $\mathit{quantifiers} \leftarrow \mathit{quantifiers} \;.\;
      \forall x_i$
    \EndFor

    \State $\mathit{guard} \leftarrow \mathit{false}$
    \For{$(f, p(\bar s)) \in \mathit{requiredChecks}$}
    \Comment{Checks with symbol $p$}
    \State $\mathit{guard} \leftarrow \mathit{guard} \vee
                  (f \wedge \bar s = \langle x_1, \ldots, x_n \rangle)$
    \EndFor
    
    \State $\mathit{matrix} \leftarrow \mathit{matrix} \wedge
             (\mathit{guard} \to
                \textsc{EncodeBodies}(\mathit{nextClauses},
                                   \langle x_1, \ldots, x_n \rangle))$

    \EndWhile

    \Return{$\mathit{quantifiers} \;.\; \mathit{matrix}$}
    \EndFunction

\medskip
    \Function{EncodeBodies}{$\mathit{clauses}, \bar s$}
      \State $\mathit{result} \leftarrow \mathit{false}$
      \For{$C \wedge p_1(\bar t_1) \wedge \cdots \wedge p_n(\bar t_n) \to p(\bar t)
                \in \mathit{clauses}$}
      \State $\mathit{quantifiers} \leftarrow \mathit{quantifiers} \;.\;
      \exists\, \fv{C \wedge p_1(\bar t_1) \wedge \cdots \wedge p_n(\bar t_n) \to p(\bar t)}$
      \For{$i \leftarrow 1, \ldots, n$}
      \State Create fresh Boolean flag $f_i$
      \State $\mathit{quantifiers} \leftarrow \mathit{quantifiers} \;.\;
      \exists f_i$
      \State $\mathit{checksRequired} \leftarrow \mathit{checksRequired}
              \cup \{(f_i, p_i(\bar t_i))\}$
      \EndFor
      \State $\mathit{disjunct} \leftarrow
                  \bar t = \bar s \wedge C \wedge f_1 \wedge \cdots \wedge f_n$
      \State $\mathit{result} \leftarrow \mathit{result} \vee \mathit{disjunct}$
      \EndFor

      \Return{$\mathit{result}$}
    \EndFunction

  \end{algorithmic}

\bigskip
We illustrate the succinct encoding using an example. Consider the clauses
\begin{lstlisting}
(C1)  r(X,Y) $\expl$ Y = X + 1
(C2)  r(X,Y) $\expl$ Y = X + 2
(C3)  s(X,Z) $\expl$ r(X, Y) $\pand$ r(Y, Z)
(C4)  false  $\expl$ s(X, Z) $\pand$ X >= 0 $\pand$ Z <= 0
\end{lstlisting}
The formula resulting from the succinct encoding is:
\begin{lstlisting}
       $\exists$ x0, x1, f1. $\forall$ x3, x4. $\exists$ x5, x6, x7, f2, f3. $\forall$ x10, x11. $\exists$ x12, x13, x14, x15.
(C4)   (x1 >= 0 $\pand$ 0 >= x0 $\pand$ f1 $\pand$
         ((f1 $\pand$ x1 = x3 $\pand$ x0 = x4) $\to$
(C3)      (x7 = x3 $\pand$ x6 = x4 $\pand$ f2 $\pand$ f3)) $\pand$
         (((f2 $\pand$ x7 = x10 $\pand$ x5 = x11) $\por$
           (f3 $\pand$ x5 = x10 $\pand$ x6 = x11)) $\to$
(C1)      ((x13 = x10 $\pand$ x12 = x11 $\pand$ x12 = x13 + 1) $\por$
(C2)       (x15 = x10 $\pand$ x14 = x11 $\pand$ x14 = x15 + 2))))
\end{lstlisting}

\end{proof}


\section{Solvability of Recursion-free Horn Clauses over Presburger
  Arithmetic is co-NEXPTIME-Complete:
  Proof of Theorem~\ref{thm:co-NEXPTIME-complete}}
\label{app:NEXPTIME-hard}

  It has already been observed that solvability is in co-NEXPTIME, so
  we proceed to show hardness by direct reduction of
  exponential-time-bounded Turing machines (possibly
  non-deterministic, with binary tape) to recursion-free Horn clauses
  over quantifier-free PA. 
  A Turing machine~$M = (Q, \delta, q_0, F)$ is
  defined by
  \begin{itemize}
  \item a finite non-empty set~$Q$ of states,
  \item an initial state~$q_0 \in Q$,
  \item a final state~$f \in Q$,
  \item a transition relation~$\delta \subseteq ((Q \setminus \{f\})
    \times \{0,1\}) \times (Q \times \{0,1\} \times \{L, R\})$.
  \end{itemize}
  Wlog, we assume that $Q = \{0, 1, \ldots, f\} \subseteq \Z$ and $q_0 = 0$.

  We define a \relsym\ $\mathit{step}(q, l, r, q', l', r')$ to
  represent single execution steps of the machine. The parameters~$l,
  r, l', r'$ represent the tape, which is encoded as non-negative
  integers; the bits in the binary representation of the integers are
  the contents of the tape cells. $l$ is the tape left of the head,
  $r$ the tape right of the head. The least-significant bit of $r$ is
  the tape cell at the head position. $l', r'$ are the corresponding
  post-state variables after one execution step.

  A tuple~$(q, b, q', b', L) \in \delta$ (moving the tape to the left)
  is represented by a clause
  \begin{equation*}
    \mathit{step}(q,\; x,\; b + 2y,\; q',\; b' + 2x,\; y)
  \end{equation*}
  where $x, y$ are the implicitly universally quantified variables of
  the clause, and $q, b, q', b'$ concrete numeric
  constants. Similarly, a tuple~$(q, b, q', b', R) \in \delta$ is
  encoded as
  \begin{equation*}
    0 \leq x \leq 1
    \to
    \mathit{step}(q,\; x + 2y,\; b + 2z,\; q',\; y,\; x + 2b' + 4z)
  \end{equation*}
  To represent termination, we add a clause~$\mathit{step}(f, x, y, f,
  x, y)$, implying that the machine will stay in the final state~$f$
  forever.

  We then introduce $n$ further clauses to model an execution
  sequence of length $2^n$:
  \begin{align*}
    \mathit{step}(x, y, z, x', y', z') \wedge
    \mathit{step}(x', y', z', x'', y'', z'')
    &\to
    \mathit{step}^1(x, y, z, x'', y'', z'')
    \\
    \mathit{step}^1(x, y, z, x', y', z') \wedge
    \mathit{step}^1(x', y', z', x'', y'', z'')
    &\to
    \mathit{step}^2(x, y, z, x'', y'', z'')
    \\ & \cdots
    \\
    \mathit{step}\,^{n-1}(x, y, z, x', y', z') \wedge
    \mathit{step}\,^{n-1}(x', y', z', x'', y'', z'')
    &\to
    \mathit{step}^n(x, y, z, x'', y'', z'')
  \end{align*}

  The final clauses expresses that the Turing machine does not
  terminate within~$2^n$ steps, when started with the initial tape~$t$:
  ~~ $\mathit{step}^n(0, 0, t, f, x, y) \to \mathit{false}$.

  Clearly, the expansion~$\mathit{exp}(\ClauseSet)$ of the resulting
  set~$\ClauseSet$ of Horn clauses is unsatisfiable (i.e., $\ClauseSet$ can be
  solved) if and only if no execution of the Turing machine, starting
  with the initial tape~$t$, terminates  within~$2^n$ steps.

\section{Clauses Solvable Semantically but not Syntactically\label{example:nonlin}}

Consider the following clause set $\ClauseSet$:
\begin{lstlisting}
multA(X,Y,Z) $\expl$ X = 0 $\pand$ Z = 0
multA(X,Y,Z) $\expl$ multA(X1,Y,Z1) $\pand$ X1 = X - 1 $\pand$ Z = Z1 + Y
multB(X,Y,Z) $\expl$ X = 0 $\pand$ Z = 0
multB(X,Y,Z) $\expl$ multB(X1,Y,Z1) $\pand$ X1 = X - 1 $\pand$ Z = Z1 + Y
false $\expl$ multA(X,Y,Z1) $\pand$ multB(X,Y,Z2) $\pand$ Z1 $\neq$ Z2
\end{lstlisting}
The clauses define two version of a multiplication and assert that the result is functionally
determined by the first two arguments. Let $a,b \subseteq {\cal Z}^3$ denote
the interpretations of \lstinline{multA} and \lstinline{multB}, respectively,
in any solution that satisfies all Horn clauses.
We show that the only possibility is that $a = b = m$ where
$m = \{(x,y,z) \in {\cal Z}^3 \mid z = xy\}$ is the multiplication relation.
Indeed, by induction we can easily prove that $m \subseteq a$ and $m \subseteq b$,
using the first four clauses. To show the converse, suppose on the contrary, that
$(x,y,z) \in a$ where $z \neq xy$ (the case for $(x,y,z) \in b$ is symmetrical).
Because $(x,y,xy) \in b$ and $x \neq z$, the last clause does not hold, a contradiction.

Therefore, the clauses have a unique solution $a = b = m$,
but this solution is not definable in a Presburger
arithmetic (e.g. by semilinearity of the solution sets, or by
decidability of Presburger arithmetic vs undecidability of
its extension with multiplication). Therefore, the above
clauses give an example of clauses that are semantically but
not syntactically solvable in Presburger arithmetic.

Further such examples can be constructed by using Horn clauses to define other 
total computable functions that are not definable in Presburger arithmetic alone.

\section{Completeness of Horn Clause Verification: Proof of
  Lemma~\ref{lem:CEGAR-completeness}}
\label{app:CEGAR-completeness}

  ``$\Rightarrow$'':
  Define each \relsym\ $p$ as the disjunction $\bigvee_{(p, Q) \in S} \bigwedge Q$.
  Since $S$ is closed under the edge relation, this yields a solution
  for the set~$\ClauseSet$ of Horn clauses.

  ``$\Leftarrow$'': Suppose $\ClauseSet$ is syntactically solvable, with
  each \relsym\ $p$ being mapped to the constraint~$C_p$. We define
  the predicate abstraction~$\Pi(p) = \{C_p\}$, and construct the ARG
  with nodes~$S = \{(p, C_p)\}$, and the maximum edge
  relation~$E$, which is closed.

\end{document}